\newif\ifsubmit
\ifsubmit \usepackage[letterpaper]{geometry}
\onehalfspacing \everymath{\displaystyle}
\providecommand{\cost}{c\optsub} %
\providecommand{\capacity}{\cost} %
\providecommand{\sumcost}{\overline{c}\optpar} %
\providecommand{\trees}{\mathcal{T}} %
\newcommand{\forests}{\mathcal{F}} %
\newcommand{\weight}{w\optsub}
\newcommand{\weights}{\mathcal{W}\optsub} %
\newcommand{\leaves}{\mathcal{L}\optsub} %
\newcommand{\apxweight}{\tilde{w}\optsub} %
\newcommand{\subforests}[1][T]{\mathcal{C}_{#1}}
\newcommand{\authoryearp}[1]{[\citeauthor{#1} \citeyear{#1}]}
\newcommand{\authoryear}[1]{\citeauthor{#1} [\citeyear{#1}]}
\begin{document}

\title{Fast and Deterministic Approximations for
  $k$-Cut}

\author{Kent Quanrud\thanks{Department of Computer Science, University
    of Illinois, Urbana-Champaign, Urbana, IL 61801. {\tt
      quanrud2@illinois.edu}.  Work on this paper is partly supported
    by NSF grant CCF-1526799.}}

\maketitle

\begin{abstract}
  In an undirected graph, a $k$-cut is a set of edges whose removal
  breaks the graph into at least $k$ connected components. The minimum
  weight $k$-cut can be computed in $n^{\bigO{k}}$ time, but when $k$
  is treated as part of the input, computing the minimum weight
  $k$-cut is NP-Hard \authoryearp{gh-94}. For $\poly{m,n,k}$-time
  algorithms, the best possible approximation factor is essentially 2
  under the small set expansion hypothesis
  \authoryearp{manurangsi-17}. \authoryear{sv-95} showed that a
  $\parof{2 - \frac{2}{k}}$-approximately minimum weight $k$-cut can
  be computed via $\bigO{k}$ minimum cuts, which implies a
  $\apxO{k m}$ randomized running time via the nearly linear time
  randomized min-cut algorithm of
  \authoryear{karger-00}. \authoryear{nk} showed that the minimum
  weight $k$-cut can be computed deterministically in
  $\bigO{mn + n^2 \log n}$ time. These results prompt two basic
  questions. The first concerns the role of randomization. Is there a
  deterministic algorithm for $2$-approximate $k$-cuts matching the
  randomized running time of $\apxO{k m}$?  The second question
  qualitatively compares minimum cut to 2-approximate minimum
  $k$-cut. Can $2$-approximate $k$-cuts be computed as fast as the
  minimum cut -- in $\apxO{m}$ randomized time?

  We make progress on these questions with a deterministic
  approximation algorithm that computes $\parof{2 + \eps}$-minimum
  $k$-cuts in $\bigO{m \log^3 n / \eps^2}$ time, via a
  $\epsmore$-approximation for an LP relaxation of $k$-cut.
\end{abstract}

\section{Introduction}

Let $G = (V,E)$ be an undirected graph with $m$ edges and $n$
vertices, with positive edge capacities given by $c : E \to
\preals$. A \emph{cut} is a set of edges $C \subseteq E$ whose removal
leaves $G$ disconnected. For $k \in \naturalnumbers$, a \emph{$k$-cut}
is a set of edges $C \subseteq E$ whose removal leaves $G$
disconnected into at least $k$ components. The capacity of a cut $C$
is the sum capacity $\sumcost{C} = \sum_{e \in C} \cost{e}$ of edges
in the cut. The \emph{minimum $k$-cut} problem is to find a $k$-cut
$C$ of minimum capacity $\sumcost{C}$.

The special case $k = 2$, which is to find the minimum cut, is
particularly well-studied. The minimum cut can be computed in
polynomial time by fixing a source $s$ and computing the minimum
$s$-$t$ cut (via $s$-$t$ max-flow) for all choices of
$t$. \citet{ni-90,ni-92-algorithmica,ni-92-sidm} and \citet{ho-94}
improved the running time to $\apxO{m n}$ which, at the time, was as
fast as computing a single maximum flow. A randomized edge contraction
algorithm by \citet{ks-96} finds the minimum cut with high probability
in $\apxO{n^2}$ time; this algorithm is now a staple of graduate level
courses on randomized algorithms. \citet{karger-00} gave a randomized
algorithm based on the Tutte--Nash-Williams theorem \citep{t-61,n-61}
that computes the minimum weight cut with high probability in
$\apxO{m}$ time. The best deterministic running time for minimum cut
is currently $\bigO{m n + n^2 \log n}$, by \citet{sw-97}.  Computing
the minimum capacity cut deterministically in nearly linear time is a
major open problem.  Recently, \citet{kt-15} made substantial progress
on this problem with a deterministic nearly linear time algorithm for
computing the minimum \emph{cardinality} cut in an unweighted simple
graph. This algorithm was simplified by \citet{lst}, and a faster
algorithm was obtained by \citet{hrw-17}.

The general case $k > 2$ is more peculiar. \citet{gh-94} showed that
for any fixed $k$, finding the minimum $k$-cut is polynomial time
solvable, but when $k$ is part of the input, the problem is
NP-Hard. The aforementioned randomized contraction algorithm of
\citet{ks-96} computes a minimum $k$-cut with high probability in
\begin{math}
  \apxO{n^{2(k-1)}}
\end{math}
time. \citet{thorup-08} gave a deterministic algorithm that also
leverages the Tutte--Nash-Williams theorem \citep{t-61,n-61} and runs
in $\apxO{m n^{2k -2}}$ time; this approach was recently refined to
improve the running time to $\apxO{m n^{2k-3}}$ \citep{cqx}. There are
slightly faster algorithms for particularly small values of $k$
\citep{levine} and when the graph is unweighted \citep{gll-18-focs}.
As far as algorithms with running times that are polynomial in $k$ are
concerned, \citet{sv-95} showed that a
$\parof{2 - \frac{2}{k}}$-approximate $k$-cut can be obtained by
$\bigO{k}$ minimum cut computations. By the aforementioned min-cut
algorithms, this approach can be implemented in $\apxO{k m}$
randomized time, $\apxO{k m n}$ time deterministically, and
$\apxO{k m}$ time deterministically in unweighted graphs. Alternatively,
\citet{sv-95} showed that the same approximation factor can be
obtained by computing a Gomory-Hu tree and taking the $k$ lightest
cuts. The Gomory-Hu tree can be computed in $n$ maximum flow
computations, and the maximum flow can be computed deterministically
in
\begin{math}
  \apxO{m \min{m^{1/2}, n^{2/3}} \log U}
\end{math}
time for integer capacities between $1$ and $U$ \citep{gr-98}. This
gives a $\apxO{m n \min{m^{1/2}, n^{2/3}} \log U}$ deterministic time
$\parof{2 - 2/k}$-approximation for minimum $k$-cut, which is faster
than $\apxO{k m n}$ for sufficiently large $k$. (There are faster
randomized algorithms for maximum flow \citep{ls-14,madry-16}, but
these still lead to slower randomized running times than $\apxO{k m}$
for $k$-cut.) An LP based 2-approximation was derived by
\citet{nr-01}, and a combinatorial 2-approximation was given by
\citet{rs-08} (see also \citep{barahona-00}), but the running times
are worse than those implied by \citet{sv-95}. Finally, a
$\bigO{m n + n^2 \log n}$ deterministic running time was obtained by a
combinatorial algorithm by \citet{nk}, which is the best known
deterministic running time.  The constant factor of 2 is believed to
be essentially the best possible.  \citet{manurangsi-17} showed that
under the small set expansion hypothesis, for any fixed $\eps > 0$,
one cannot compute a $(2-\eps)$-approximation for the minimum $k$-cut
in $\poly{k,m,n}$ time unless $P = NP$.

The state of affairs for computing 2-approximate minimum $k$-cuts in
$\poly{k,m,n}$-time parallels the status of minimum cut. The fastest
randomized algorithm is an order of magnitude faster than the fastest
deterministic algorithm, while in the unweighted case the running
times are essentially equal. A basic question is whether there exists
a deterministic algorithm that computes a $2$-approximation in
$\apxO{k m}$ time, matching the randomized running time. As
\citeauthor{sv-95}'s algorithm reduces $k$-cut to $k$ minimum cuts,
the gap between the deterministic and randomized running times for
$k$-cut is not only similar to, but a reflection of, the gap between
the deterministic and randomized running times for minimum cut. An
$\apxO{m}$ deterministic algorithm for minimum cut would close the gap
for 2-approximate $k$-cut as well.

A second question asks if computing a 2-approximate $k$-cut is
qualitatively harder than computing the minimum cut. There is
currently a large gap between the fastest algorithm for minimum cut
and the fastest algorithm for 2-approximate minimum $k$-cut. Can one
compute 2-approximate minimum $k$-cuts as fast as minimum cuts -- in
$\apxO{m}$ randomized time?  Removing the linear dependence on $k$
would show that computing 2-approximate $k$-cuts is as easy as
computing a minimum cut.

\subsection{The main result}

We make progress on both of these questions with a deterministic and
nearly linear time $(2 + \eps)$-approximation scheme for minimum
$k$-cuts. To state the result formally, we first introduce an LP
relaxation for the minimum $k$-cut due to \citet{nr-01}.
\begin{align}
  \labelthisequation[L]{lp}
  \begin{aligned}
    \text{min } %
    & \sum_{e} \cost{e} x_e 
    \text{ over } %
    x: E \to \reals \\
    \text{s.t.\ } & \sum_{e \in T} x_e \geq k - 1 \text{ for all
      spanning trees } T,
    \\
    & %
    0 \leq x_e \leq 1 \text{ for all edges } e.
  \end{aligned}
\end{align}
The feasible integral solutions of the LP \refequation{lp} are
precisely the $k$-cuts in $G$. The main contribution of this work is a
nearly linear time approximation scheme for \refequation{lp}.

\begin{theorem}
  \labeltheorem{apx-lp} %
  In $\bigO{m \log^3(n) / \eps^2}$ deterministic time, one can compute
  an $\epspm$-multiplicative approximation to \refequation{lp}.
\end{theorem}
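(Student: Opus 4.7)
The plan is to approximate \refequation{lp} via a multiplicative-weights-update (MWU) scheme whose separation oracle is a single minimum-spanning-tree (MST) computation. The key observation is that although \refequation{lp} has exponentially many constraints, a vector $x$ is feasible if and only if the MST of $G$ under edge weights $x$ has total weight at least $k-1$; thus one MST call both certifies feasibility and, when $x$ is infeasible, returns the most violated constraint. I would pass to the Lagrangian dual of the spanning-tree constraints: introducing nonnegative multipliers $y_T$ and eliminating $x$ in closed form using the box constraints rewrites the problem as maximizing $(k-1)\sum_T y_T - \sum_e \max\{0,\,\sum_{T \ni e} y_T - c_e\}$ over $y \geq 0$, a packing LP with one ``soft'' constraint per edge and a linear-optimization oracle over spanning trees.

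The algorithm itself is in the Garg--K\"onemann style. Maintain edge lengths $\ell_e$, initialized to a small value proportional to $1/c_e$. At each iteration, find an MST $T$ under $\ell$, increment the dual variable associated with $T$, and multiplicatively update $\ell_e \gets \ell_e(1+\eta)$ for $e \in T$ with step size $\eta = \Theta(\eps)$. Terminate when a potential derived from $\sum_e c_e \ell_e$ crosses a fixed threshold. The approximate primal $x$ is then read off from the normalized lengths, clipped to $[0,1]$, and uniformly rescaled to enforce exact feasibility; the \epspm{} guarantee follows from the standard MWU potential analysis, with the rescaling absorbing at most another $(1+\eps)$ factor into $\eps$.

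The main quantitative hurdle is controlling the number of MST calls. A direct MWU analysis yields a ``width'' that can depend on $k$ or on $\sum_e c_e$, producing too many iterations. The fix is a Young-style width reduction in which edges whose lengths threaten to grow past a cap are saturated, pinned to $x_e = 1$ in the primal and removed from subsequent MST oracles; this is precisely where the box constraints earn their keep. With saturation in place I expect the effective width to be $O(\log n)$, so MWU converges in $O(\log^2 n / \eps^2)$ iterations. Each iteration is an MST on $m$ edges in $O(m \log n)$ time, giving the claimed $O(m \log^3 n / \eps^2)$ total.

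The most delicate step will be verifying that the saturation step preserves the approximation guarantee. I anticipate arguing this by coupling: each saturated edge $e$ contributes $c_e$ to the algorithm's primal cost, and I expect to show that any near-optimal LP solution must set $x_e$ close to $1$ as well, by exhibiting, for each saturated edge, a family of spanning trees certifying that its $x$-value in any $(1+\eps)$-feasible solution is within $O(\eps)$ of $1$. A secondary concern is that maintaining the MST under slowly changing edge lengths does not allow one to amortize cost across iterations beyond the $O(m \log n)$ per call used above, so the running-time analysis can treat each MST call as a black box and sum trivially.
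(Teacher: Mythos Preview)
Your approach diverges from the paper's and leaves the central difficulty unresolved. The paper argues explicitly that attacking \refequation{lp} head-on is the obstacle: it is a mixed packing--covering LP, and the Lagrangian dual you write down is not a positive LP (the $-\sum_e \max\{0,\cdot\}$ term is exactly the troublesome $z_e$ variable in the paper's displayed dual), so off-the-shelf MWU does not yield a feasible point. The paper's decisive step is to add all knapsack covering constraints, obtaining the pure covering LP \refequation{kc}; its dual \refequation{packing-forests} is then a pure \emph{packing} LP over \emph{forests} (not spanning trees), with size-dependent objective coefficients, and width-independent MWU applies with standard guarantees. Your saturation heuristic is morally the dynamic, on-the-fly version of this reformulation---pinning $x_e=1$ on a tree constraint is precisely what produces the forest constraint $\sum_{e\in F}x_e\ge |F|+k-n$---but you have not carried this through, and you yourself flag correctness of saturation as the ``most delicate step.'' The sketched justification, that every saturated edge must have $x_e$ near $1$ in any near-optimal LP solution, is not established; an edge's MWU length can blow up because many oracle trees route through it, which says nothing directly about its value in the LP optimum.

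Your running-time accounting also does not close. The $O(\log^2 n/\eps^2)$ iteration bound rests on an asserted post-saturation width of $O(\log n)$ with no derivation; for packing trees or forests into a capacitated graph the natural width is instance-dependent, not polylogarithmic. The paper instead uses the width-\emph{independent} framework, which incurs $O(m\log n/\eps^2)$ iterations; recomputing an MST from scratch each iteration would then cost $O(m^2\log^2 n/\eps^2)$. The stated $O(m\log^3 n/\eps^2)$ bound is obtained by maintaining the MST \emph{dynamically} under edge-weight increases (at $O(\log^2 n)$ amortized per update) and amortizing the total number of weight updates against the bounded growth of the MWU potential---exactly the amortization you dismiss in your final sentence. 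Moreover, because \refequation{packing-forests} packs forests with heterogeneous objective coefficients, the per-iteration oracle is not a plain MST: one must find the best-ratio \emph{prefix} of the sorted MST edges, which requires an auxiliary ordered data structure and a unimodality lemma to reduce the search to $O(\log k)$ probes.
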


The integrality gap of \refequation{lp} is known to be
$\parof{2 - 2/n}$ \citep{nr-01,cgn-06}. Upon inspection, the
rounding algorithm can be implemented in $\bigO{m \log n}$ time,
giving the following nearly linear time
$\parof{2 + \eps}$-approximation scheme for $k$-cut.

\begin{theorem}
  \labeltheorem{apx-k-cut} %
  For sufficiently small $\eps > 0$, there is a deterministic
  algorithm that computes a $k$-cut with total capacity at most
  $\parof{2 + \eps}$ times the optimum value to \refequation{lp} in
  $\bigO{m \log^3(n) / \eps^2}$ time.
\end{theorem}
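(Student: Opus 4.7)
The plan is to compose the LP approximation scheme from Theorem~\ref{apx-lp} with the known $(2 - 2/n)$-rounding algorithm of \citet{nr-01,cgn-06}. Since the integrality gap is already $(2 - 2/n)$, there is a slack of $\Theta(1/n)$ relative to the target factor of $2$, and an additional $(1+\eps')$ factor of error from the LP side can be absorbed into the target $(2+\eps)$ guarantee.

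First, I would invoke Theorem~\ref{apx-lp} with accuracy parameter $\eps' = \eps/4$ to obtain, in $\bigO{m \log^3(n)/\eps^2}$ deterministic time, a feasible fractional $x \colon E \to [0,1]$ satisfying $\sum_{e} \cost{e} x_e \leq (1 + \eps') \cdot \mathrm{OPT}$, where $\mathrm{OPT}$ denotes the optimal value of \refequation{lp}.

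Next, I would apply the rounding algorithm of \citet{nr-01} (with the integrality gap analysis of \citet{cgn-06}) to $x$, producing an integral $k$-cut $C$ satisfying
\[
  \sumcost{C} \;\leq\; \parof{2 - 2/n} \sum_{e} \cost{e} x_e \;\leq\; \parof{2 - 2/n}(1 + \eps') \cdot \mathrm{OPT} \;\leq\; (2 + \eps) \cdot \mathrm{OPT}
\]
for $\eps$ sufficiently small, as required by the theorem.

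The remaining task---which I expect to be the main obstacle---is to verify that the rounding can be implemented in $\bigO{m \log n}$ deterministic time, so that it is dominated by the LP approximation and the total running time is $\bigO{m \log^3(n)/\eps^2}$. The rounding is threshold-based: sort the edges by $x_e$ in $\bigO{m \log n}$ time, then sweep through the $\bigO{m}$ candidate thresholds while maintaining the connected components of the complementary subgraph with a union-find data structure in an additional $\bigO{m \alpha(n)}$ time. The delicate point is confirming that the threshold rounding in \citet{nr-01,cgn-06}---often presented with a randomly chosen threshold in $(0,1/2]$---can be derandomized simply by enumerating all $\bigO{m}$ distinct candidate thresholds induced by the coordinates of $x$ and returning the cheapest resulting $k$-cut, without compromising the $(2 - 2/n)$ approximation factor. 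Once this inspection is complete, combining it with Theorem~\ref{apx-lp} yields the claimed $\bigO{m \log^3(n)/\eps^2}$ bound.
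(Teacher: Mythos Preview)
Your high-level plan---approximate the LP via Theorem~\ref{apx-lp}, then round with a $(2-2/n)$-factor rounding, absorbing the $(1+\eps')$ LP error into the target $(2+\eps)$---is exactly the paper's route. The gap is in your description of the rounding step.

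The rounding of \citet{nr-01,cgn-06} is \emph{not} a threshold sweep over $x_e$-values, and derandomizing a random $\theta \in (0,1/2]$ by enumerating breakpoints is not what is happening here. Simply deleting $\{e : x_e > \theta\}$ need not leave $k$ components for any single $\theta$; the LP constraints $\sum_{e \in T} x_e \ge k-1$ do not imply that a level set of $x$ is a $k$-cut. The actual rounding (which the paper packages as \reftheorem{rounding} and calls \algo{greedy-cuts}) proceeds differently: first remove the edges with $x_e \ge (n-1)/(2n)$ (a single fixed threshold, just to guarantee $x_e < 1/2$ on what remains); then compute a minimum spanning forest $F$ with respect to the weights $x_e$; then output the $k-\ell$ cheapest \emph{greedy cuts} of $F$, i.e., cuts induced by connected components of prefixes of $F$ in sorted order. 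The $2(1-1/n)$ factor comes from the Goemans--Williamson primal-dual certificate for the MST (\reflemma{mst-duals}), not from an averaging-over-thresholds argument.

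Correspondingly, the $\bigO{m \log n}$ implementation is not a union--find sweep. The paper builds an auxiliary binary forest encoding the greedy components as subtrees, then uses dynamic trees and LCA queries to compute, for each greedy component, the total $c$-weight of edges crossing it; this costs $\bigO{\log n}$ per edge. So your plan is right at the level of ``LP plus known rounding,'' but the rounding algorithm you sketch is not the one that achieves the $(2-2/n)$ factor, and you would need to replace that paragraph with the MST/greedy-cuts construction to close the argument.
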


This algorithm should be compared with the aforementioned algorithms
of \citet{sv-95}, which computes a
$\parof{2 - \frac{2}{k}}$-approximation to the minimum $k$-cut in
$\apxO{k m}$ randomized time (with high probability), and of
\citet{nk}, which computes a $2$-approximate minimum cut in
\begin{math}
  \bigO{mn + n^2 \log n}
\end{math}
deterministic time. At the cost of a $\epsmore$-multiplicative factor,
we obtain a deterministic algorithm with nearly linear running time
for \emph{all values of $k$.} The approximation factor converges to
$2$, and we cannot expect to beat 2 under the small set expansion
hypothesis \citep{manurangsi-17}. Thus, \reftheorem{apx-k-cut} gives a
tight, deterministic, and nearly linear time approximation scheme for
$k$-cut. \reftheorem{apx-k-cut} leaves a little bit of room for
improvement: the hope is for a deterministic algorithm that computes
$\parof{2 - o(1)}$-approximate minimum $k$-cuts in $\apxO{m}$
time. Based on \reftheorem{apx-k-cut}, we conjecture that such an
algorithm exists.

\subsection{Overview of the algorithm}

We give a brief sketch of the algorithm, for the sake of informing
subsequent discussion on related work in \refsection{related-work}. A
more complete description of the algorithm begins in earnest in
\refsection{mwu}.

The algorithm consists of a nearly linear time approximation scheme
for the LP \refequation{lp}, and a nearly linear time rounding
scheme. The approximation scheme for solving the LP extends techniques
from recent work \citep{cq-17-soda}, applied to the dual of an
\emph{indirect reformulation} of \refequation{lp}. The rounding scheme
is a simplification of the rounding scheme by \citet{cgn-06}
for the more general Steiner $k$-cut problem, which builds on the
primal-dual framework of \citet{gw-95}.

The first step is to obtain a $\epsmore$-multiplicative approximation
to the LP \refequation{lp}. Here a $\epsmore$-multiplicative
approximation to the LP \refequation{lp} is a feasible vector $x$ of
cost at most a $\epsmore$-multiplicative factor greater than the
optimum value.

The LP \refequation{lp} can be solved exactly by the ellipsoid method,
with the separation oracle supplied by a minimum spanning tree (abbr.\
MST) computation, but the running time is a larger polynomial than
desired. From the perspective of fast approximations, the LP
\refequation{lp} is difficult to handle because it is an exponentially
large mixed packing and covering problem, with exponentially many
covering constraints alongside upper bounds on each edge. Fast
approximation algorithms for mixed packing and covering problems (e.g.
\citep{young-14,cq-18-soda}) give bicriteria approximations that
meet either the covering constraints or the packing constraints but
not both.  Even without consideration of the objective function, it is
not known how to find feasible points to general mixed packing and
covering problems in time faster than via exact LP
solvers. Alternatively, one may consider the dual of \refequation{lp},
as follows. Let $\trees$ denote the family of spanning trees in $G$.
\begin{align*}
  \text{max } %
  & (k-1)\sum_{T} y_T - \sum_{e \in E} z_e
    \text{ over } y: \trees \to \reals \text{ and } z: E \to \reals
  \\
  \text{s.t.\ }                 %
  & \sum_{T \ni e} y_T \leq \capacity{e} + z_e \text{ for all edges }
    e, \\
  & y_T \geq 0 \text{ for all spanning trees } T \in \trees, %
  \\
  & z_e \geq 0 \text{ for all edges } e.
\end{align*}
This program is not a positive linear program, and the edge potentials
$z \in \nnreals^E$ are difficult to handle by techniques such as
\citep{young-14,cq-17-soda,cq-18-soda}. It was not known, prior to
this work, how to obtain any approximation to \refequation{lp} (better
than its integrality gap) with running time faster than the ellipsoid
algorithm.

Critically, we consider the following larger LP instead of
\refequation{lp}. Let $\forests$ denote the family of all forests in
$G$.
\begin{align}
  \begin{aligned}
    \text{min } %
    & \sum_{e} \cost{e} x_e 
    \text{ over } %
    x: E \to \reals \\
    \text{s.t.\ } & \sum_{e \in T} x_e \geq \sizeof{F} + k - n \text{
      for all forests } F \in \forests,
    \\
    & %
    x_e \geq 0 \text{ for all edges } e \in E.
  \end{aligned}
      \labelthisequation[C]{kc}
\end{align}
\refequation{kc} is also an LP relaxation for $k$-cut. In fact,
\refequation{kc} is equivalent to \refequation{lp}, as one can verify
directly (see \reflemma{lp=kc} below). \refequation{kc} is obtained
from \refequation{lp} by adding all the knapsack covering constraints
\citep{cflp-00}, which makes the packing constraints ($x_e \leq 1$ for
each edge $e$) redundant.

Although the LP \refequation{kc} adds exponentially many constraints
to the original LP \refequation{lp}, \refequation{kc} has the
advantage of being a pure covering problem. Its dual is a pure packing
problem, as follows.
\begin{align}
  \begin{aligned}
    \text{maximize } & \sum_{F \in \forests} \parof{\sizeof{F} + k -
      n} y_F %
    \text{ over } y: \forests \to \reals %
    \\
    \text{ s.t.\ } & \sum_{F \ni e} y_F \leq \capacity{e} \text{ for
      all
      edges } e \in E,\\
    & y_F \geq 0 \text{ for all forests } F \in \forests.
  \end{aligned}
      \labelthisequation[P]{packing-forests}
\end{align}
The above LP packs forests into the capacitated graph $G$ where the
value of a forest $F$ depends on the number of edges it contains,
$\sizeof{F}$. The objective value $\sizeof{F} + k - n$ of a forest $F$
is a lower bound on the number of edges $F$ contributes to any
$k$-cut.  Clearly, we need only consider forests with at least
$n - k + 1$ edges.

To approximate the desired LP \refequation{kc}, we apply the MWU
framework to the above LP \refequation{packing-forests}, which
generates $\epspm$-multiplicative approximations to both
\refequation{packing-forests} and its dual, \refequation{kc}.
Implementing the MWU framework in nearly linear time is not immediate,
despite precedent for similar problems.  In the special case where
$k = 2$, the above LP \refequation{packing-forests} fractionally packs
spanning trees into $G$.  A nearly linear time approximation scheme
for $k = 2$ is given in previous work \citep{cq-17-soda}. The general
case with $k > 2$ is more difficult for two reasons. First, the family
of forests that we pack is larger than the family of spanning
trees. Second, \refequation{packing-forests} is a weighted packing
problem, where the coefficients in the objective depends on the number
of edges in the forest. When $k = 2$, we need only consider spanning
trees with $n - 1$ edges, so all the coefficients are 1 and the
packing problem is unweighted.  The heterogeneous coefficients in the
objective create technical complications in the MWU framework, as the
Lagrangian relaxation generated by the framework is no longer solved
by a MST. In \refsection{mwu}, we give an overview of the MWU
framework and discuss the algorithmic complications in greater depth.
In \refsection{mwu-oracle} and \refsection{mwu-weight-update}, we show
how to extend the techniques of \citep{cq-17-soda} with some new
observations to overcome these challenges and approximate the LP
\refequation{packing-forests} in nearly the same time as one can
approximately pack spanning trees. Ultimately, we obtain the following
deterministic algorithm for approximating the LP
\refequation{packing-forests}.

\begin{theorem}
  \labeltheorem{apx-lp} %
  In $\bigO{m \log^3(n) / \eps^2}$ deterministic time, one can compute
  $\epspm$-multiplicative approximations to
  \refequation{packing-forests}, \refequation{kc} and
  \refequation{lp}.
\end{theorem}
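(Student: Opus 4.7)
The plan is to apply the multiplicative weights update (MWU) framework to the pure packing LP \refequation{packing-forests}, which will simultaneously yield $\epspm$-multiplicative approximate primal and dual solutions. In the standard packing template, one maintains edge weights $w : E \to \nnreals$ initialized uniformly; each iteration invokes a Lagrangian oracle that returns a forest $F \in \forests$ with $\sizeof{F} \geq n - k + 1$ approximately maximizing the ratio $\parof{\sizeof{F} + k - n} / \sum_{e \in F} w_e / \capacity{e}$, and then boosts $w_e$ multiplicatively for each $e \in F$. After $\apxO{1/\eps^2}$ iterations, appropriately scaled averages of the returned forests form a $\epspm$-approximate primal $y$ for \refequation{packing-forests}, while the normalized weights form a $\epspm$-approximate dual $x$ for \refequation{kc}. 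Since \refequation{kc} is equivalent to \refequation{lp} via the referenced \reflemma{lp=kc}, the same $x$ approximates \refequation{lp} up to a $\epspm$ factor as well. The target running time of $\bigO{m \log^3(n)/\eps^2}$ allocates roughly $\bigO{m \log^2 n}$ amortized work per iteration.

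For the Lagrangian oracle itself, the key observation is that the coefficient $\sizeof{F} + k - n$ is linear in $\sizeof{F}$, so the maximum-ratio forest of any fixed size $s \in \{n - k + 1, \ldots, n - 1\}$ consists of the $s$ smallest-key edges that form a forest, where the edge keys are $w_e / \capacity{e}$. Equivalently, a maximum-ratio forest is always a prefix of the MST with respect to these keys, so an exact oracle call reduces to one MST computation followed by choosing the best prefix length $s$, at cost $\bigO{m \log n}$. This contrasts with the $k = 2$ case in \citep{cq-17-soda}, where the oracle is just a single MST; here the size-dependent objective is the new ingredient, but given the MST, selecting the best prefix adds only $\bigO{n}$ further work.

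The main obstacle is that running this oracle from scratch every iteration would cost $\apxO{m/\eps^2}$ in total, a factor of $n$ too slow. To amortize, I would follow the strategy of \citep{cq-17-soda}: maintain a dynamic MST under the evolving keys $w_e / \capacity{e}$, using link-cut-trees together with a careful bucketing of edges by weight. Between consecutive iterations only edges in the last returned forest have their weights changed, and only a small subset of these change by enough to perturb the MST. Processing each effective change in polylogarithmic time, together with dynamically maintaining the best-prefix choice as the MST evolves, yields $\bigO{m \log^2 n}$ amortized work per iteration, hence $\bigO{m \log^3(n) / \eps^2}$ overall. Extending the $k = 2$ machinery to handle the size-dependent prefix choice is the new technical wrinkle to which \refsection{mwu-oracle} and \refsection{mwu-weight-update} are devoted, and it is the step that requires the most care; once it is in place, combining it with the standard MWU analysis immediately yields $\epspm$-approximations to all three LPs in the claimed deterministic time.
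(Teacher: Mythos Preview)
Your high-level plan matches the paper: apply width-independent MWU to \refequation{packing-forests}, reduce the Lagrangian oracle to choosing a prefix of the MST (\reflemma{minimum-weight-sub-forest}), and maintain the MST dynamically. But your running-time accounting is off in a way that hides the real bottlenecks. You state there are $\apxO{1/\eps^2}$ iterations with a budget of $\bigO{m \log^2 n}$ work each. In the width-independent framework used here, the step size $\delta$ is chosen so that some edge weight grows by an $\exp{\eps}$ factor each iteration; since each of the $m$ weights can do this only $\bigO{\log n / \eps^2}$ times, the iteration count is $\bigO{m \log n / \eps^2}$, and the per-iteration budget is only $\bigO{\log^2 n}$ amortized. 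Under your budget, recomputing the MST from scratch every iteration would already suffice and there would be nothing left to prove; under the correct budget even your $\bigO{n}$-time prefix selection is too slow --- the paper needs the unimodality of the prefix ratio (\reflemma{greedy-search}) to binary-search the best $\ell$ in $\bigO{\log k}$ queries of $\bigO{\log n}$ each via \reflemma{ordered-mst}.

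The miscount also conceals the second bottleneck, which your proposal does not address: after the oracle returns a forest $F$, one must update the weights of all $\bigOmega{n}$ edges in $F$. Over $\bigO{m \log n / \eps^2}$ iterations this is $\bigO{mn \log n / \eps^2}$ if done directly. Your remark that ``only a small subset of these change by enough to perturb the MST'' concerns MST maintenance, not the cost of writing the weight increments themselves. The paper's fix is to decompose each returned $F$ into $\bigO{\log n}$ canonical subforests (\reflemma{canonical-subforests}) and apply a lazy batched-increment data structure to each, so that the total weight-update work amortizes against the bounded total logarithmic growth of all edge weights (\reflemma{polylog-weight-update}). This tight coupling between the structure of the oracle's output and the weight-update mechanism is the crux of \refsection{mwu-weight-update} and is missing from your sketch.
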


The second step, after computing a fractional solution $x$ to
\refequation{lp} with \reftheorem{apx-lp}, is to round $x$ to a
discrete $k$-cut. The rounding step is essentially that of
\citet{cgn-06} for Steiner $k$-cuts. Their case is more general
than ours; we simplify their rounding scheme, and pay greater
attention to the running time. The rounding scheme is based on the
elegant primal-dual MST algorithm of \citet{gw-95}.
\begin{theorem}
  \labeltheorem{rounding} Given a feasible solution $x$ to
  \refequation{kc}, one can compute a $k$-cut $C$ with cost at most
  $2 \parof{1 - 1/n}$ times the cost of $x$ in $\bigO{m \log n}$ time.
\end{theorem}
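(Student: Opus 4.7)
The plan is to adapt the primal-dual rounding scheme of Chekuri--Guha--Naor for Steiner $k$-cut to the pure covering LP $(C)$, in the spirit of the Goemans--Williamson framework. I treat the LP solution $x$ as a vector of edge lengths and use it to drive a component-merging procedure that extracts a $k$-partition whose boundary in $G$ has total cost at most $2(1-1/n) \cdot c \cdot x$.

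Concretely, I would sort edges in non-decreasing order of $x_e$ and process them in a Kruskal-style sweep with a union-find data structure, building an MST $T^*$ of $G$ with respect to the lengths $x$. The LP constraint of $(C)$ applied to $F = T^*$ yields $\sum_{e \in T^*} x_e \geq (n-1) + k - n = k - 1$. I then remove the $k-1$ edges of $T^*$ with largest $x$-value; this partitions $V$ into $k$ subtrees spanning components $V_1, \ldots, V_k$, and I return the $k$-cut $C$ consisting of all edges of $G$ crossing this partition. Interpreting the Kruskal sweep as a simultaneous Goemans--Williamson moat-growing procedure, where the threshold increases continuously with $x_e$, makes the selected partition a natural output of the primal-dual framework.

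For the cost bound, I would use a charging argument based on the MST cycle property: every non-tree edge $e = (u,v) \in C$ has $x_e$ at least the largest $x$-value along the unique $u$-$v$ path in $T^*$, and in particular at least the $x$-value of some removed heavy tree edge. Bundled with the forest-packing dual $(P)$, where each removed tree edge ``witnesses'' the cross-edges charged to it, this yields the sharp factor $2(1-1/n)$ matching the integrality gap of $(C)$ stated earlier.

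For the running time, sorting is $O(m \log n)$, Kruskal with union-find is $O(m \alpha(n))$, selecting the $k-1$ heaviest tree edges is $O(n)$, and one additional $O(m)$ pass lists the boundary edges, giving the claimed $O(m \log n)$ bound. The main obstacle will be justifying the sharp constant $2(1-1/n)$ rather than a weaker $O(1)$ bound: a plain MST cycle-property argument only gives an $O(1)$-approximation, and extracting the tight constant requires careful bookkeeping in the primal-dual accounting, essentially reproducing the Chekuri--Guha--Naor rounding lemma in the simpler $k$-cut setting.
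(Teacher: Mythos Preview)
Your algorithm differs from the paper's in a way that breaks the $2(1-1/n)$ guarantee. You remove the $k-1$ tree edges of largest $x$-value, which is the same as taking the $k$ greedy components present after Kruskal has processed its first $n-k$ edges. The paper instead considers \emph{all} greedy components that ever appear during the Kruskal sweep (there are $2n-1$ of them, forming a laminar family) and selects those whose induced cuts have smallest \emph{$c$-cost}, after first deleting the edges with $x_e$ above a threshold near $1/2$. Selecting by $c$-cost rather than by $x$-value is precisely what makes the Goemans--Williamson dual accounting go through.

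Here is a concrete counterexample to your rule. Take $G = K_4$ with unit capacities, $k = 2$, and the feasible point $x_{12} = x_{34} = \tfrac14$, $x_{13} = x_{14} = x_{23} = x_{24} = \tfrac12$. One checks directly that $\sum_{e\in T} x_e \ge 1$ for every spanning tree $T$, so $x$ is feasible for \refequation{kc}. Kruskal first adds $(1,2)$ and $(3,4)$, then one diagonal, say $(1,3)$; the unique $x$-heaviest tree edge is $(1,3)$, and removing it yields the partition $\{1,2\},\{3,4\}$, whose cut consists of all four diagonals and has cost $4$. But $\sum_e c_e x_e = \tfrac52$, so $2(1-\tfrac1n)\sum_e c_e x_e = \tfrac{15}{4} < 4$. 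By contrast, each singleton greedy component induces a cut of cost $3 \le \tfrac{15}{4}$, and the paper's rule picks one of those.

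The root problem is that your charging via the MST cycle property only compares $x$-values of crossing edges to $x$-values of removed tree edges; it says nothing about the $c$-costs you must bound. The forest-packing dual you allude to does not repair this: the Goemans--Williamson moat variables $y_C$ satisfy $\sum_{C\ni e} y_C \le x_e$ and $\sum_C y_C \ge \frac{k-1}{2(1-1/n)}$, and these inequalities certify cheapness only for the $c$-\emph{minimum} greedy cuts, not for the particular $k$ components your rule happens to produce. Your closing sentence already flags the sharp constant as ``the main obstacle''; it is in fact a genuine gap, and closing it requires changing the selection criterion in the algorithm (to $c$-cost among all greedy cuts), not just tightening the bookkeeping. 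Your running-time analysis is fine.
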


Applying \reftheorem{rounding} to the output of \reftheorem{apx-lp}
gives \reftheorem{apx-k-cut}.

Computing the minimum $k$-cut via the LP \refequation{lp} has
additional benefits. First, computing a minimum $k$-cut with an
approximation factor relative to the LP may be much stronger than the
same approximation factor relative to the original problem, as LP's
perform well in practice. Second, the solution to the LP gives a
certificate of approximation ratio, as we can compare the rounded
$k$-cut to the LP solution to infer an upper bound on the
approximation ratio that may be smaller than 2.

Lastly, we note that some data structures can be simplified at the
cost of randomization by using a randomized MWU framework instead
\citep{cq-18-soda}. These modifications are discussed at the end of
\refsection{mwu-weight-update}.

\subsection{Further related results and discussion}
\labelsection{related-work}

\paragraph{Fixed parameter tractability.}
The $k$-cut results reviewed above were focused on either exact
polynomial-time algorithms for constant $k$ or
$\parof{2 - o(1)}$-approximations in $\poly{k,m,n}$ time, with a
particular emphasis on the fastest algorithms in the $\poly{k,m,n}$
time regime. There is also a body of literature concerning fixed
parameter tractable algorithms for $k$-cut. \citet{defpr-03} showed
that $k$-cut is $W[1]$-hard in $k$ even for simple unweighted graphs;
$W[1]$-hardness implies that it is unlikely to obtain a running time
of the form $f(k) \poly{m,n}$ for any function $f$. On the other hand,
\citet{kt-11} showed that $k$-cut is fixed parameter tractable in the
number of edges in the cut. More precisely, \citet{kt-11} gave a
deterministic algorithm that, for a given cardinality
$s \in \naturalnumbers$, time, either finds a $k$-cut with at most $s$
edges or reports that no such cut exists in
\begin{math}
  \bigO{s^{s^{O(s)}} n^2}
\end{math}
time. The running time was improved to
\begin{math}
  \bigO{2^{\bigO{s^2 \log s}}n^4 \log n}
\end{math}
deterministic time and
\begin{math}
  \apxO{2^{\bigO{s} \log k} n^2}
\end{math}
randomized time by \citet{cchpp-16}.

Besides exact parameterized algorithms for $k$-cut, there is interest
in approximation ratios between 1 and 2.  \citet{xcy-11} showed that
by adjusting the reduction of \citet{sv-95} to use (exact)
minimum $\ell$-cuts -- instead of minimum (2-)cuts, for any choice of
$\ell \in \setof{2,\dots,k-1}$ -- one can obtain
\begin{math}
  \parof{2 - \frac{\ell}{k} + \bigO{\frac{\ell^2}{k^2}}}
\end{math}-approximate minimum $k$-cuts in
\begin{math}
  n^{\bigO{\ell}}
\end{math}
time.  For sufficiently small $\eps > 0$, by setting
$\ell \approx \eps k$, this gives a
\begin{math}
  n^{\bigO{\eps k}}
\end{math}
time algorithm for $\parof{2 - \eps}$-approximate minimum cuts.

The hardness results of \citet{defpr-03} and \citet{manurangsi-17} do
not rule out approximation algorithms with approximation ratio better
than 2 and running times of the form $f(k)\poly{m,n}$ (for any
function $f$). Recently, \citet{gll-18} gave a FPT algorithm that,
for a particular constant $c \in (0,1)$, computes a
$(2 - c)$-approximate $k$-cut in
\begin{math}
  2^{\bigO{k^6}} \apxO{n^4}
\end{math}
time. This improves the running time of \citep{xcy-11} for $\eps = c$
and $k$ greater than some constant. Further improvements by
\citet{gll-18-focs} achieved a deterministic 1.81 approximation in
$2^{\bigO{k^2}} n^{\bigO{1}}$ time, and a randomized
$\epsmore$-approximation (for any $\eps > 0$) in
$\parof{k / \eps}^{\bigO{k}} n^{k + \bigO{1}}$ time.

\paragraph{Knapsack covering constraints.}
We were surprised to discover that adding the knapsack covering
constraints allowed for \emph{faster} approximation
algorithms. Knapsack covering constraints, proposed by \citet{cflp-00}
in the context of capacitated network design problems, generate a
stronger LP whose solutions can be rounded to obtain better
approximation factors
\citep{cflp-00,ky-01,ky-05,cckk-15}. However, the larger LP
can be much more complicated and is usually more difficult to
solve. We recently obtained a faster approximation scheme for
approximating covering integer programs via knapsack covering
constraints, but the dependency on $\eps$ is much worse, and the
algorithm has a ``weakly nearly linear'' running time that suffers
from a logarithmic dependency on the multiplicative range of input
coefficients \citep{cq-19}.

\paragraph{Fast approximations via LP's.}
Linear programs have long been used to obtain more accurate
approximations to NP-Hard problems. Recently, we have explored the use
of fast LP solvers to obtain \emph{faster} approximations, including
situations where polynomial time algorithms are known. In recent work
\citep{cq-18-christofides}, we used a linear time approximation to an
LP relaxation for Metric TSP (obtained in \citep{cq-17-focs}) to
effectively sparsify the input and accelerate Christofides'
algorithm. While nearly linear time approximations for complicated
LP's are surprising in and of itself, perhaps the application to
obtain faster approximations for combinatorial problems is more
compelling. We think this result is an important data point for this
approach.

\section{Reviewing the MWU framework and identifying bottlenecks}

\labelsection{mwu}

In this section, let $\eps > 0$ be fixed. It suffices to assume that
$\eps \geq 1 / \poly{n}$, since below this point one can use the
ellipsoid algorithm instead and still meet the desired running
time. For ease of exposition, we seek only a $\apxmore$-multiplicative
approximation; a $\epsmore$-multiplicative approximation with the same
asymptotic running time follows by decreasing $\eps$ by a constant
factor.

\subsection{$k$-cuts as a (pure) covering problem}
As discussed above, the first (and most decisive) step towards a fast,
fractional approximation to $k$-cut is identifying the right LP. The
standard LP \refequation{lp} is difficult because it is a mixed
packing and covering problem, and fast approximation algorithms for
mixed packing and covering problems lead to bicriteria approximations
that we do not know how to round. On the other hand, extending
\refequation{lp} with all the knapsack cover constraints makes the
packing constraints $x_e \leq 1$ redundant (as shown below), leaving
the pure covering problem, \refequation{kc}. The LPs \refequation{lp}
and \refequation{kc} have essentially equivalent solutions in the
following sense.
\begin{lemma}
  \labellemma{lp=kc} Any feasible solution $x \in \nnreals^E$ to
  \refequation{lp} is a feasible solution to \refequation{kc}. For any
  feasible solution $x$ to \refequation{kc}, the truncation
  $x' \in \nnreals^n$ defined by $x_e' = \max{x_e,1}$ is a feasible
  solution to both \refequation{lp} and \refequation{kc}.
\end{lemma}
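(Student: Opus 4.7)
The plan is to prove the two directions separately, starting from the observation that the LP \refequation{kc} differs from \refequation{lp} in that it weakens the packing constraints $x_e \le 1$ and replaces the spanning-tree covering constraints with covering constraints indexed by \emph{all} forests. (Also, note the statement appears to contain a typo: the truncation should read $x'_e = \min\{x_e, 1\}$, so that $x'$ can satisfy the $x_e \le 1$ packing constraint of \refequation{lp}. I treat it as such.)

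For the forward implication, given $x$ feasible for \refequation{lp} and an arbitrary forest $F$, I would extend $F$ to a spanning tree $T$ by adding $n - 1 - |F|$ edges, which is possible since $F$ has at most $n-1$ edges. Applying the spanning-tree constraint of \refequation{lp} to $T$ gives $\sum_{e \in T} x_e \ge k - 1$, and using $x_e \le 1$ on the $n - 1 - |F|$ added edges yields
\begin{equation*}
  \sum_{e \in F} x_e \;\ge\; (k - 1) - (n - 1 - |F|) \;=\; |F| + k - n,
\end{equation*}
which is exactly the forest covering constraint of \refequation{kc}. Since the remaining nonnegativity constraints carry over unchanged, $x$ is feasible for \refequation{kc}.

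For the reverse implication, let $x$ be feasible for \refequation{kc} and set $x'_e = \min\{x_e, 1\}$. Clearly $x' \in [0,1]^E$, so the packing constraints of \refequation{lp} hold. To check the spanning-tree constraints, I fix a spanning tree $T$ and split it into $F = \{e \in T : x_e < 1\}$ (which is a subforest of $T$) and $T \setminus F$. On $F$ the truncation is inactive, so $\sum_{e \in F} x'_e = \sum_{e \in F} x_e \ge |F| + k - n$ by the forest constraint of \refequation{kc} applied to $F$, and on $T \setminus F$ we have $\sum_{e \in T \setminus F} x'_e = (n - 1) - |F|$. Adding these yields $\sum_{e \in T} x'_e \ge k - 1$, as required.

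Finally, I would verify that $x'$ remains feasible for \refequation{kc} by the same style of argument: for any forest $F$, set $F_0 = \{e \in F : x_e < 1\}$ and write $\sum_{e \in F} x'_e = \sum_{e \in F_0} x_e + (|F| - |F_0|)$. The only mildly delicate point, and the one step that needs real attention, is the case $|F_0| + k - n < 0$, in which \refequation{kc}'s forest constraint on $F_0$ is vacuous and only gives $\sum_{e \in F_0} x_e \ge 0$. In that regime, however, $|F_0| \le n - k$, so the contribution $|F| - |F_0| \ge |F| + k - n$ on its own already meets the required bound. In the complementary case $|F_0| + k - n \ge 0$, the forest constraint contributes the missing slack directly. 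This case split is the main (small) obstacle; everything else is a routine extension/truncation argument.
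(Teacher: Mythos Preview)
Your proof is correct and follows essentially the same route as the paper: extend a forest to a spanning tree for the forward direction, and for the reverse direction split the forest into the edges that are truncated and those that are not, then apply the forest constraint of \refequation{kc} to the untruncated part. Two small remarks: the paper avoids your separate verification of \refequation{lp} by observing that the spanning-tree constraints of \refequation{lp} are literally the $|F|=n-1$ instances of the forest constraints in \refequation{kc}, and your case split on the sign of $|F_0|+k-n$ is unnecessary since the inequality $\sum_{e\in F_0} x_e \ge |F_0|+k-n$ is a constraint of \refequation{kc} regardless of that sign (and is in any case implied by nonnegativity when the right-hand side is negative). Your reading of the $\min/\max$ typo is correct --- the paper's own proof text carries the same slip but clearly intends $x'_e=\min\{x_e,1\}$, as witnessed by its use of $x'\le\mathbf{1}$.
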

\begin{proof}
  Let $x$ be a feasible solution to \refequation{lp}. We claim that
  $x$ is feasible in \refequation{kc}. Indeed, let $F$ be a forest,
  and extend $F$ to a tree $T$. Then
  \begin{align*}
    \sum_{e \in F} x_e %
    = %
    \sum_{e \in T} x_e - \sum_{e \in T \setminus F} x_e %
    \geq                                                %
    k - 1 - \sizeof{T \setminus F}                      %
    =                                                   %
    k - 1 - \parof{n-1 - \sizeof{F}}                    %
    =                                                   %
    \sizeof{F} + k - n,
  \end{align*}
  as desired.

  Conversely, let $x \in \nnreals^E$ be a feasible solution to
  \refequation{kc}, and let $x'$ be the coordinatewise maximum of $x$
  and $\ones$. Since $x' \leq \ones$, and the covering constraints in
  \refequation{lp} are a subset of the covering constraints in
  \refequation{kc}, if $x'$ is feasible in \refequation{kc} then it is
  also feasible in \refequation{lp}. To show that $x'$ is feasible in
  \refequation{kc}, let $F$ be a forest. Let
  $F' = \setof{e \in F \where x_e > 1}$ be the edges in $F$ truncated
  by $x'$ and let $F'' = F \setminus F'$ be the remaining edges. Since
  \tagr $x'_e = 1$ for all $e \in F'$ and $x_e' = x_e$ for all
  $e \in F''$, and \tagr $x$ covers $F''$ in \refequation{kc}, we have
  \begin{align*}
    \sum_{e \in F} x_e'          %
    =                           %
    \sum_{e \in F'} x_e' + \sum_{e \in F''} x'_e %
    \tago{=}
    \sizeof{F'} + \sum_{e \in F''} x_e %
    \tago{\geq}                                  %
    \sizeof{F'} + \sizeof{F''} + k - n    %
    =                                     %
    \sizeof{F} + k - n,
  \end{align*}
  as desired.
\end{proof}

While having many more constraints than \refequation{lp},
\refequation{kc} is a pure covering problem, for which finding a
feasible point (faster than an exact LP solver) is at least
plausible. The dual of \refequation{kc} is the LP
\refequation{packing-forests}, which packs forests in the graph and
weights each forest by the number of edges minus $\parof{n - k}$. The
coefficient of a forest in the objective of
\refequation{packing-forests} can be interpreted as the number of
edges that forest must contribute to any $k$-cut.

\subsection{A brief sketch of width-independent MWU}

We apply a \emph{width-independent} version of the MWU framework to
the packing LP \refequation{packing-forests}, developed by
\citet{gk-07} for multicommodity flow problems and generalized
by \citet{young-01}. We restrict ourselves to a sketch of the
framework and refer to previous work for further details.

The width-independent MWU framework is a monotonic and iterative
algorithm that starts with an empty solution $y = \zeroes$ to the LP
\refequation{packing-forests} and increases $y$ along forests that
solve certain Lagrangian relaxations to
\refequation{packing-forests}. Each Lagrangian relaxation is designed
to steer $y$ away from packing forests that have edges that are
already tightly packed. For each edge $e$, the framework maintains a
weight $\weight{e}$ that (approximately) exponentiates the load of
edge $e$ w/r/t the current forest packing $y$, as follows:
\begin{align*}
  \ln{\capacity{e} \weight{e}}  %
  \approx %
  \frac{\log n}{\eps} \frac{\sum_{F \ni e} y_F}{\capacity{e}}. %
  \labelthisequation{weight-load}
\end{align*}
The weight can be interpreted as follows.  For an edge $e$, the value
$\frac{\sum_{F \ni e} y_F}{\capacity{e}}$ is the amount of capacity
used by the current packing $y$ relative to the capacity of the edge
$e$. We call $\frac{\sum_{F \ni e} y_F}{\capacity{e}}$ the (relative)
\emph{load} on edge $e$ and is $\leq 1$ if $y$ is a feasible
packing. The weight $\weight{e}$ is exponential in the load on the
edge, where the exponential is amplified by the leading coefficient
$\frac{\log n}{\eps}$.  Initially, the empty solution $y = \emptyset$
induces zero load on any edge and we have
$\weight{e} = \frac{1}{\capacity{e}}$ for each edge $e$.

Each iteration, the framework solves the following Lagrangian
relaxation of \refequation{packing-forests}:
\begin{align*}
  \text{maximize }              %
  & \sum_{F \in \forests} \parof{\sizeof{F} + k - n}
    z_F \text{ over } z: \forests \to \nnreals %
    \text{ s.t.\ }
    \sum_{e \in E} \weight{e} \sum_{F \ni e} z_F %
    \leq                                         %
    \sum_{e \in E} \weight{e} \capacity{e}.
    \labelthisequation[R]{relaxation}
\end{align*}
Given a $\apxmore$-approximate solution $z$ to the above, the
framework adds $\delta z$ to $y$ for a carefully chosen value
$\delta > 0$ (discussed in greater detail below). The next iteration
encounters a different relaxation, where the edge weights $\weight{e}$
are increased to account for the loads increased by adding $\delta
z$. Note that the edge weights $\weight{e}$ are monotonically
increasing over the course of the algorithm.

At the end of the algorithm, standard analysis shows that the
fractional forest packing $y$ has objective value $\apxless \opt$, and
that $\apxless y$ satisfies all of the packing constraints. The error
can be made one-sided by scaling $y$ up or down. Moreover, it can be
shown that at some point in the algorithm, an easily computable
rescaling of $w$ is a $\apxmore$-relative approximation for the
desired LP \refequation{kc} (see for example
\citep{gk-98,gk-07,cq-17-focs}). Thus, although we may appear more
interested in solving the dual LP \refequation{packing-forests}, we
are approximating the desired LP \refequation{kc} as well.

The choice of $\delta$ differentiates this ``width-independent'' MWU
from other MWU-type algorithms in the literature. The step size
$\delta$ is chosen small enough that no weight increases by more than
an $\exp{\eps}$-multiplicative factor, and large enough that some
weight increases by (about) an $\exp{\eps}$-multiplicative factor. The
analysis of the MWU framework reveals that
\begin{math}
  \rip{\weight}{\capacity} \leq n^{\bigO{1/\eps}}
\end{math}
at all times. In particular, each weight can increase by an
$\exp{\eps}$-multiplicative factor at most
$\bigO{\frac{\ln n}{\eps^2}}$ times, so there are at most
\begin{math}
  \bigO{\frac{n \ln n}{\eps^2}}
\end{math}
iterations total.

\subsection{Two bottlenecks}
\labelsection{bottlenecks}

The MWU framework alternates between (a) solving the relaxation
\refequation{relaxation} induced by edge weights $\weight{e}$ and (b)
updating the weights $\weight{e}$ for each edge in response to the
solution to the relaxation. As the framework requires
$\bigO{\frac{m \log n}{\eps^2}}$ iterations, both parts must be
implemented in polylogarithmic amortized time to reach the desired
running time. A sublinear per-iteration running time seems unlikely by
the following simple observations.

Consider first the complexity of simply expressing a solution.  Any
solution $z$ to \refequation{relaxation} is indexed by forests in
$G$. A forest can have $\bigOmega{n}$ edges and requires
$\bigOmega{n \log n}$ bits to specify. Writing down the index of just
one forest in each of $\bigO{\frac{m \log n}{\eps^2}}$ iterations
takes $\bigO{\frac{m n \log^2 n}{\eps^2}}$ time.  The difficulty of
even writing down a solution to \refequation{packing-forests} is not
just a feature of the MWU framework. In general, there exists an
optimal solution to \refequation{packing-forests} that is supported by
at most $m$ forests, as $m$ is the rank of the implicit packing
matrix. Writing down $m$ forests also requires $\bigOmega{m n \log n}$
bits.  Thus, either on a per-iteration basis in the MWU framework or
w/r/t to the entire LP, the complexity of the output suggests a
quadratic lower bound on the running time.

A second type of bottleneck arises from updating the weights. The
weights $\weight{e}$ for each edge reflect the load induced by the
packing $y$, per the formula \refequation{weight-load}. After
computing a solution $z$ to the relaxation \refequation{relaxation},
and updating $y \gets y + \delta z$, we need to update the weights
$\weight{e}$ to reflect the increased load from $\delta z$. In the
worst case, $\delta z$ packs into every edge, requiring us to update
$\bigO{m}$ individual weights. At the very least, $\delta z$ should
pack into the edges of at least one forest, and thus effect
$\bigOmega{n}$ edges. Updating $n$ edge weights in each iteration
requires $\bigO{\frac{m n \log n}{\eps^2}}$ time.

Even in hindsight, implementing either part -- solving the relaxation
or updating the weights -- in isolation in sublinear time remains
difficult. Our algorithm carefully plays both parts off each other, as
co-routines, and amortizes against invariants revealed by the analysis
of the MWU framework. The seemingly necessary dependence between parts
is an important theme of this work and an ongoing theme from previous
work \citep{cq-17-soda,cq-17-focs,cq-19}.

\section{Greedily finding forests to pack in $\bigO{\log^2 n}$
  amortized time}

\labelsection{mwu-oracle}

The MWU framework reduces \refequation{packing-forests} to a sequence
of problems of the form \refequation{relaxation}.  An important aspect
of the Lagrangian approach is that satisfying the single packing
constraint in \refequation{relaxation} is much simpler than
simultaneously satisfying all of the packing constraints in
\refequation{packing-forests}. With only 1 packing constraint, it
suffices to (approximately) identify the best bang-for-buck forest $F$
and taking as much as can fit in the packing constraint. The
``bang-for-buck'' ratio of a forest $F$ is the ratio
\begin{align*}
  \frac{\sizeof{F} + k - n}{\sum_{e \in F} \weight{e}},
\end{align*}
where $\sizeof{F}$ is the number of edges in $F$. Given a forest $F$
(approximately) maximizing the above ratio, we set $z = \gamma e_F$
for $\gamma$ as large as possible as fits in the single packing
constraint. Note that, when $k = 2$, the optimal forest is the minimum
weight spanning tree w/r/t $\weight$.

We first consider the simpler problem of maximizing the above ratio
over forests $F$ with exactly $\sizeof{F} = \ell$ edges, for some
$\ell > n - k$. Recall that the MST can be computed greedily by
repeatedly adding the minimum weight edge that does not induce a
cycle. Optimality of the greedy algorithm follows from the fact that
spanning trees are the bases of a matroid called the \emph{graphic
  matroid}. The forests of exactly $\ell$ edges are also the bases of
a matroid; namely, the restriction of the graphic matroid to forests
of at most $\ell$ edges. In particular, the same greedy procedure
computes the minimum weight forest of $\ell$ edges. Repeating the
greedy algorithm for each choice of $\ell$, one can solve
\refequation{relaxation} in $\bigO{k m \log n}$ time for each of
\begin{math}
  \bigO{\frac{m \log n}{\eps^2}}
\end{math}
iterations.

Stepping back, we want to compute the minimum weight forest with
$\ell$ edges for a range of $k - 1$ values of $\ell$, and we can run
the greedy algorithm for each choice of $\ell$. We observe that the
greedy algorithm is oblivious to the parameter $\ell$, except for
deciding when to stop. We can run the greedy algorithm \emph{once} to
build the MST, and then simulate the greedy algorithm for any value of
$\ell$ by taking the first $\ell$ edges added to the MST.

\begin{lemma}
  \labellemma{minimum-weight-sub-forest} Let $T$ be the minimum weight
  spanning tree w/r/t $\weight$. For any $\ell \in [n-1]$, the minimum
  weight forest w/r/t $\weight$ with $\ell$ edges consists of the
  first $\ell$ minimum weight edges of $T$.
\end{lemma}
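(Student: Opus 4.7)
The plan is to identify the correct matroid and then chase through the greedy / Kruskal's algorithm carefully. Specifically, the forests in $G$ with at most $\ell$ edges are exactly the independent sets of the rank-$\ell$ truncation of the graphic matroid of $G$ (a truncation of a matroid is itself a matroid). The bases of this truncation are precisely the forests with exactly $\ell$ edges, so by the standard matroid optimization theorem the minimum-weight basis is produced by the greedy algorithm: process the edges of $G$ in nondecreasing order of $w$, and greedily include each edge that does not create a cycle with the previously included edges, stopping after $\ell$ edges have been selected.

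Next, I would observe that this ``truncated Kruskal'' procedure agrees, edge-by-edge, with ordinary Kruskal's algorithm run to completion: both scan edges in the same nondecreasing order of $w$ and both accept an edge if and only if it does not close a cycle with the current set. They differ only in the stopping condition. In particular, the set of $\ell$ edges output by the truncated procedure is exactly the set of the first $\ell$ edges that Kruskal's algorithm adds when building $T$.

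Finally, I would argue that these first $\ell$ edges added by Kruskal's are exactly the $\ell$ lightest edges of $T$: since every edge Kruskal's adds lies in $T$, and since Kruskal's processes candidates in nondecreasing order of weight, the order in which the edges of $T$ are appended is precisely their sorted-by-weight order; hence the first $\ell$ appended are the $\ell$ lightest. Combining the three steps gives the lemma.

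The only subtle point, and the main thing to pin down, is consistent tie-breaking when several edges share the same weight: the greedy algorithm, Kruskal's construction of $T$, and the ``first $\ell$ edges of $T$'' should all be understood w/r/t the same total order refining $w$. Fixing any such order at the outset makes all three constructions deterministic and makes the edge-by-edge equivalence in the previous paragraph immediate; without a fixed order, the statement should be read as existence of \emph{some} minimum-weight $\ell$-edge forest that is a prefix of \emph{some} MST, which is what is needed subsequently.
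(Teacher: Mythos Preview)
Your proposal is correct and follows essentially the same argument the paper gives in the paragraph immediately preceding the lemma: the forests of exactly $\ell$ edges are the bases of the rank-$\ell$ truncation of the graphic matroid, so greedy is optimal, and since Kruskal's algorithm is oblivious to $\ell$ except for the stopping condition, the first $\ell$ edges it selects (which are the $\ell$ lightest edges of $T$) form the optimum. Your explicit attention to consistent tie-breaking is a welcome refinement the paper leaves implicit.
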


\reflemma{minimum-weight-sub-forest} effectively reduces
\refequation{relaxation} to one MST computation, which takes
$\bigO{m\log n}$ time. Repeated over $\bigO{\frac{m \log n}{\eps^2}}$
iterations, this leads to a $\bigO{\frac{m^2 \log^2 n}{\eps^2}}$
running time.  As observed previously \citep{thorup-08,cq-17-soda},
the minimum weight spanning tree does not have to be rebuilt from
scratch from one iteration to another, but rather adjusted dynamically
as the weights change.

\begin{lemma}[{\citet{hlt-01}}]
  \labellemma{dynamic-mst} In $\bigO{\log^2 n}$ amortized time per
  increment to $\weight$, one can maintain the MST w/r/t
  $\weight$.
\end{lemma}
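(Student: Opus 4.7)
The plan is to invoke the fully dynamic minimum spanning forest data structure of \citet{hlt-01}, which supports arbitrary edge insertions and deletions on a weighted graph in $\bigO{\log^2 n}$ amortized time per operation, while explicitly maintaining the current minimum spanning forest after every update.

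In our setting the edge set of $G$ is fixed; only one edge weight changes per step. This is easily reduced to the HLT interface. When $\weight{e}$ is incremented from its old value $w$ to its new value $w'$ for some edge $e$, I first delete $e$ (at weight $w$) from the data structure, and then reinsert $e$ at the new weight $w'$. The two calls together cost $\bigO{\log^2 n}$ amortized time. After the reinsertion, the data structure stores the MSF of $G$ with respect to the updated weight function $\weight$, which (since $G$ is connected throughout the MWU execution) is the desired MST.

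The main obstacle here is not conceptual: the substantive content is entirely packaged inside the HLT data structure. The only observation required of us is that a weight increment on a fixed edge set can be expressed as a delete--reinsert pair, so that the edge-update interface of HLT suffices. Correctness does not rely on monotonicity of $\weight$, although one can note that the MWU framework only increments weights, so the underlying HLT amortized analysis applies without modification. A closing remark will verify that the subsequent use of the MST in \reflemma{minimum-weight-sub-forest} can draw on the same stored tree, since HLT represents the MSF in a form from which its edges can be read out in sorted order by weight without affecting the stated asymptotic bound.
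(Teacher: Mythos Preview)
The paper offers no proof for this lemma: it is stated as a bare citation to \citet{hlt-01} and used as a black box. Your delete--reinsert reduction is the standard way to express a single weight change through an edge-update interface, so you are supplying a justification where the paper supplies none, and the reduction itself is sound.

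Two corrections, however. First, your closing remark is not right: the HLT data structure does not expose the MSF edges in weight-sorted order. The paper obtains sorted access through a \emph{separate} auxiliary structure---the balanced binary tree $B$ of \reflemma{ordered-mst}---which is updated whenever the dynamic MST swaps an edge in or out; the input to \reflemma{minimum-weight-sub-forest} is read from $B$, not from HLT. Second, and more delicate: the $\bigO{\log^2 n}$ amortized bound in \citet{hlt-01} is for dynamic \emph{connectivity}; their fully dynamic minimum spanning forest bound is $\bigO{\log^4 n}$ per edge update. Your reduction, which invokes the fully dynamic MSF interface on an arbitrary delete followed by an insert, would therefore naively yield $\bigO{\log^4 n}$ rather than $\bigO{\log^2 n}$. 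This matches what the paper itself asserts, so it is not a discrepancy between you and the paper, but you should be aware that the citation as literally read does not deliver the stated exponent without further argument.
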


The running time of \reflemma{dynamic-mst} depends on the number of
times the edge weights change, so we want to limit the number of
weight updates exposed to \reflemma{dynamic-mst}. It is easy to see
that solving \refequation{relaxation} w/r/t a second set of weights
$\apxweight$ that is a $\epspm$-multiplicative factor coordinatewise
approximation of $\weight$ gives a solution that is a
$\epspm$-multiplicative approximation to \refequation{relaxation}
w/r/t $\weight$. We maintain the MST w/r/t an approximation
$\apxweight$ of $\weight$, and only propagate changes from $\weight$
to $\apxweight$ when $\weight$ is greater than $\apxweight$ by at
least a $\epsmore$-multiplicative factor. As mentioned in
\refsection{mwu}, a weight $\weight{e}$ increases by a
$\epsmore$-multiplicative factor at most
$\bigO{\frac{\log n}{\eps^2}}$ times. Applying \reflemma{dynamic-mst}
to the discretized weights $\apxweight$ and amortizing against the
total growth of weights in the system gives us the following.
\begin{lemma}
  In $\bigO{\frac{m \log^3 n}{\eps^2}}$ total time, one can maintain
  the MST w/r/t a set of weights $\apxweight$ such that, for all
  $e \in E$, we have $\apxweight{e} \in \epspm \weight{e}$. Moreover,
  the MST makes at most $\bigO{\frac{m \log n}{\eps^2}}$ edge updates
  total.
\end{lemma}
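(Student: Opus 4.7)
The plan is to feed the dynamic MST data structure of \reflemma{dynamic-mst} only the coarsened weights $\apxweight$, lazily propagating updates from $\weight$. First I would initialize $\apxweight{e} = \weight{e} = 1/\capacity{e}$ for every edge $e$ and adopt the following policy: whenever an update causes $\weight{e} > \epsmore \apxweight{e}$, reset $\apxweight{e} \gets \weight{e}$ and hand this single coordinate change to the dynamic MST; otherwise do nothing. Since $\weight{e}$ is monotonically non-decreasing and the policy enforces $\apxweight{e} \leq \weight{e} \leq (1+\eps)\apxweight{e}$ between propagations, the invariant $\apxweight{e} \in \epspm \weight{e}$ holds at all times. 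Detecting whether to propagate is $\bigO{1}$ per update to $\weight{e}$ and can be charged to that update.

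Next I would bound the number of propagations per edge. Each propagation multiplies $\apxweight{e}$ by at least a factor of $(1+\eps)$, and the MWU invariant $\rip{\weight}{\capacity} \leq n^{\bigO{1/\eps}}$ recalled in \refsection{mwu} gives $\weight{e}\capacity{e} \leq n^{\bigO{1/\eps}}$. Thus $\weight{e}$ grows by at most an $n^{\bigO{1/\eps}}$-factor over its initial value $1/\capacity{e}$, and $\apxweight{e}$ is reset at most $\log_{1+\eps}\parof{n^{\bigO{1/\eps}}} = \bigO{\log n / \eps^2}$ times. Summing over the $m$ edges, the dynamic MST receives at most $\bigO{m \log n / \eps^2}$ coordinate updates in total, which establishes the second claim. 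Each such update costs $\bigO{\log^2 n}$ amortized by \reflemma{dynamic-mst}, yielding a total maintenance cost of $\bigO{m \log n / \eps^2} \cdot \bigO{\log^2 n} = \bigO{m \log^3 n / \eps^2}$.

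The main subtlety, though routine once set up, is choosing the granularity of $\apxweight$ so that both halves of the analysis go through simultaneously: coarse enough that the per-edge potential $\log_{1+\eps}\parof{\weight{e}\capacity{e}}$ is bounded by $\bigO{\log n / \eps^2}$, yet fine enough that approximately solving the Lagrangian relaxation \refequation{relaxation} with weights $\apxweight$ still produces a $\epspm$-approximate oracle for the same relaxation under $\weight$. The $\epspm$ tolerance is exactly what is absorbed by the approximation slack already admitted by the MWU framework, so no further loss is incurred, and the two bounds in the lemma follow from the amortization above.
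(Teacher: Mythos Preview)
Your proposal is correct and takes essentially the same approach as the paper: lazily propagate $\weight$ to $\apxweight$ only on $(1+\eps)$-factor jumps, bound the number of propagations per edge by $\bigO{\log n/\eps^2}$ via the MWU weight cap $\rip{\weight}{\capacity}\leq n^{\bigO{1/\eps}}$, and charge each propagation $\bigO{\log^2 n}$ amortized time via \reflemma{dynamic-mst}. The paper itself does not give a standalone proof of this lemma; your argument expands exactly the informal justification in the paragraph that precedes the statement, with the only cosmetic gap being that the lemma's second claim concerns \emph{MST edge updates} rather than coordinate updates fed to the data structure---but each weight increment triggers at most one edge swap in the MST, so the bound transfers immediately.
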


Given such an MST $T$ as above, and
\begin{math}
  \ell \in \setof{n - k+1,\dots,n-1}
\end{math}
we need the $\ell$ minimum ($\apxweight$-)weight edges to form an
(approximately) minimum weight forest $F$ of $\ell$ edges. However,
the data structure of \reflemma{dynamic-mst} does not provide a list
of edges in increasing order of weight. We maintain the edges in
sorted order separately, where each time the dynamic MST replaces one
edge with another, we make the same update in the sorted
list. Clearly, such a list can be maintained in $\bigO{\log n}$ time
per update by dynamic trees. Our setting is simpler because the range
of possible values of any weight $\apxweight{e}$ is known in advance
as follows.  For $e \in E$, define
\begin{align*}
  \weights{e} =                 %
  \setof{                       %
  \frac{\epsmore^i}{\cost{e}}   %
  \where                        %
  i \in
  \setof{0,1,\dots,\bigO{\frac{\log n}{\eps^2}}} %
  }.
\end{align*}
Then $\apxweight{e} \in \weights{e}$ for all $e \in E$ at all
times. Define
\begin{math}
  \leaves = \setof{ (e,\alpha) \where \alpha \in \weights{e}}.
\end{math}
The set $\leaves$ represents the set of all possible assignments of
weights to edges that may arise. As mentioned above,
\begin{math}
  \sizeof{\leaves} = \bigO{\frac{m \log n}{\eps^2}}.
\end{math}
Let $B$ be a balanced binary tree over $\leaves$, where $\leaves$ is
sorted by increasing order of the second coordinate $\apxweight{e}$
(and ties are broken arbitrarily). The tree $B$ has height
$\log \sizeof{\leaves} = \bigO{\log m}$ and can be built in
$\bigO{\sizeof{\leaves}} = \bigO{\frac{m \log n}{\eps^2}}$
time\footnote{or even less time if we build $B$ lazily, but
  constructing $B$ is not a bottleneck}.

We mark the leaves based on the edges in $T$.  Every time the MST $T$
adds an edge $e$ of weight $\apxweight{e}$, we mark the corresponding
leaf $(e,\apxweight{e})$ as marked. When an edge $e$ is deleted, we
unmark the corresponding leaf. Lastly, when an edge $e \in T$ has its
weight increased from $\alpha$ to $\alpha'$, we unmark the leaf
$(e,\alpha)$ and mark the leaf $(e,\alpha')$. Note that only edges in
$T$ have their weight changed.

For each subtree of $T$, we track aggregate information and maintain
data structures over the set of all marked leaves in the subtree. For
a node $b$ in $B$, let $\leaves{b}$ be the set of marked leaves in the
subtree rooted at $b$. For each $b \in B$ we maintain two quantities:
(a) the number of leaves marked in the subtree rooted at $b$,
$\sizeof{\leaves{b}}$; and (b) the sum of edges weights of leaves
marked in the subtree rooted at $b$,
$W_b = \sum_{(e,\alpha) \in \leaves{b}} \alpha$. Since the height of
$B$ is $\bigO{\log n}$, both of these quantities can be maintained in
$\bigO{\log n}$ time per weight update.

\begin{lemma}
  \labellemma{ordered-mst} %
  In $\bigO{m \log{n} / \eps^2}$ time initially and $\bigO{\log n}$
  time per weight update, one can maintain a data structure that,
  given $\ell \in [n-1]$, returns in $\bigO{\log n}$ time (a) the
  $\ell$ minimum weight edges of the MST (implicitly), and (b) the
  total weight of the first $\ell$ edges of the MST.
\end{lemma}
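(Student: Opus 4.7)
The construction of $B$, the scheme for marking and unmarking leaves, and the auxiliary statistics $\sizeof{\leaves{b}}$ and $W_b$ have essentially been described already in the paragraphs leading up to the lemma; the plan is to finish the proof by verifying the claimed time bounds and exhibiting the query procedure, which is the one remaining piece.

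For the initial construction, $B$ has $\sizeof{\leaves} = \bigO{m \log n / \eps^2}$ leaves and can be built in linear time in the size of $\leaves$, with all nodes initially unmarked so that $\sizeof{\leaves{b}} = 0$ and $W_b = 0$. For the per-update bound, each change reported by the dynamic MST of \reflemma{dynamic-mst} toggles the marked status of $\bigO{1}$ leaves of $B$: inserting or deleting an edge toggles one leaf, and changing an approximate weight $\apxweight{e}$ for $e \in T$ from $\alpha$ to $\alpha'$ toggles two leaves. Each toggle requires updating $\sizeof{\leaves{b}}$ and $W_b$ at the $\bigO{\log n}$ ancestors of the affected leaf, so the per-update cost is $\bigO{\log n}$.

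The heart of the lemma is the query, which is a standard rank-selection procedure over the augmented binary tree. Given $\ell$, we descend $B$ from the root to locate the $\ell$-th lightest marked leaf. At an internal node $b$ with children $b_L$ and $b_R$, if $\sizeof{\leaves{b_L}} \geq \ell$ we recurse into $b_L$; otherwise we recurse into $b_R$ with $\ell \gets \ell - \sizeof{\leaves{b_L}}$, recording $b_L$ as a subtree whose marked leaves are contained entirely in the output. Since $B$ has height $\bigO{\log m} = \bigO{\log n}$, the descent runs in $\bigO{\log n}$ time and collects a set $S$ of $\bigO{\log n}$ subtree roots whose marked leaves together form exactly the $\ell$ lightest edges of the MST; returning $S$ is the implicit answer for part (a). For part (b), we return $\sum_{b \in S} W_b$, a sum of $\bigO{\log n}$ precomputed values, again in $\bigO{\log n}$ time. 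The only minor subtlety is that ties in the second coordinate of $\leaves$ are broken arbitrarily but consistently when $B$ is built, so the same tie-breaking order is used for every query and the notion of ``first $\ell$'' edges is well defined.
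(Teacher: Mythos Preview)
Your proposal is correct and follows the same approach as the paper. The paper in fact leaves this lemma without an explicit proof, relying on the construction described in the preceding paragraphs; your rank-selection descent that collects $\bigO{\log n}$ subtree roots is exactly the intended query procedure (and is reused later in \reflemma{canonical-subforests}, where these subtree roots become the ``canonical subforests'').
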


With \reflemma{dynamic-mst} and \reflemma{ordered-mst}, we can now
compute, for any $\ell \in [n-1]$, a $\epsmore$-approximation to the
minimum weight forest of $\ell$ edges, along with the sum weight of
the forest, both in logarithmic time. To find the best forest, then,
we need only query the data structure for each of the $k-1$ integer
values from $n - k + 1$ to $n - 1$. That is, excluding the time to
maintain the data structures, we can now solve the relaxation
\refequation{relaxation} in $\bigO{k \log n}$ time per iteration.

At this point, we still require $\bigO{k m \log n}$ time just to solve
the Lagrangian relaxations \refequation{relaxation} generated by the
MWU framework. (There are other bottlenecks, such as updating the
weights at each iteration, that we have not yet addressed.) To remove
the factor of $k$ in solving \refequation{relaxation}, we require one
final observation.

\begin{lemma}
  \labellemma{greedy-search} The minimum ratio subforest of $T$ can be
  found by binary search.
\end{lemma}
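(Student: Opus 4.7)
The plan is to show that the map $\ell \mapsto R(\ell) \defeq (\ell + k - n)/W(\ell)$ is unimodal on $\ell \in \setof{n-k+1,\dots,n-1}$, where $W(\ell) = \sum_{i=1}^\ell \apxweight{e_i}$ and $e_1,e_2,\ldots,e_{n-1}$ are the MST edges in nondecreasing order of $\apxweight$. By \reflemma{minimum-weight-sub-forest} the subforest $\setof{e_1,\dots,e_\ell}$ is the minimum-weight subforest of $T$ with exactly $\ell$ edges, and its ratio is exactly $R(\ell)$. Once unimodality is in hand, the maximizer can be located by binary search on the sign of $R(\ell+1) - R(\ell)$, using $\bigO{\log k}$ queries, each answered in $\bigO{\log n}$ time via \reflemma{ordered-mst}, for a total of $\bigO{\log^2 n}$ per iteration of MWU.

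For the unimodality claim, clear denominators in $R(\ell+1) \geq R(\ell)$ to get the equivalent inequality
\begin{align*}
  \frac{W(\ell)}{\ell + k - n} \;\geq\; \apxweight{e_{\ell+1}}.
\end{align*}
Intuitively, the ratio continues to increase exactly as long as the next MST edge is lighter than the running ``average'' $W(\ell)/(\ell + k - n)$. Since the weights $\apxweight{e_{\ell+1}}$ are nondecreasing in $\ell$, I expect this to switch direction at most once.

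To make this precise, suppose $R(\ell+1) \leq R(\ell)$, i.e.\ $W(\ell) \leq (\ell + k - n)\,\apxweight{e_{\ell+1}}$. Then
\begin{align*}
  W(\ell+1) \;=\; W(\ell) + \apxweight{e_{\ell+1}} \;\leq\; (\ell + 1 + k - n)\, \apxweight{e_{\ell+1}} \;\leq\; (\ell + 1 + k - n)\, \apxweight{e_{\ell+2}},
\end{align*}
so the running average at $\ell+1$ satisfies $W(\ell+1)/(\ell + 1 + k - n) \leq \apxweight{e_{\ell+2}}$, which is exactly the condition $R(\ell+2) \leq R(\ell+1)$. Thus once $R$ starts to decrease it continues to decrease, establishing unimodality.

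The only remaining concern is algorithmic: each probe of the binary search must evaluate $R(\ell)$ and $R(\ell+1)$, which requires $W(\ell)$ and $W(\ell+1)$ (and the individual edge weights $\apxweight{e_{\ell+1}}$ can be recovered as differences). Both $\sizeof{F}$ and $W$ on prefixes of the sorted MST are maintained by the data structure of \reflemma{ordered-mst} in $\bigO{\log n}$ time per query, so each comparison is $\bigO{\log n}$ and the binary search runs in $\bigO{\log^2 n}$ time. I do not expect any genuine obstacle here; the potential subtlety is merely that the binary search should be interpreted as locating the last $\ell$ for which $R$ is still nondecreasing, rather than a classical predicate-based search, but unimodality makes this standard.
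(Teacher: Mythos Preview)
Your argument is correct and is essentially the paper's own proof, reindexed by $\ell = n - k + i$: both reduce $R(\ell+1) \lessgtr R(\ell)$ to a comparison of $W(\ell)$ with $(\ell+k-n)\,\apxweight{e_{\ell+1}}$ and then use monotonicity of the sorted edge weights to propagate the ``once decreasing, always decreasing'' step. If anything, your induction step is slightly cleaner than the paper's explicit sum manipulation, and your remark about recovering $\apxweight{e_{\ell+1}}$ as $W(\ell+1)-W(\ell)$ via \reflemma{ordered-mst} is exactly what is needed algorithmically.
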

\begin{proof}%
  \newcommand{\sumweight}[1]{\sum_{j=1}^{n - k + #1} \apxweight{j}}%
  Enumerate the MST edges $e_1,\dots,e_{n-1} \in T$ in increasing
  order of weight.  For ease of notation, we denote
  $\apxweight{i} = \apxweight{e_i}$ for $i \in [n-1]$. We define a
  function $f: [k-1] \to \preals$ by
  \begin{align*}
    f(i) = \frac{i}{\sumweight{i}}.
  \end{align*}
  For each $i$, $f(i)$ is the ratio achieved by the first $n - k + i$
  edges of the MST. Our goal is to maximize $f(i)$ over $i \in
  [k-1]$. For any $i \in [k-2]$, we have
  \begin{align*}
    f(i+1) - f(i)               %
    &=
      \frac{i+1}{\sumweight{i + 1}} -
      \frac{i}{\sumweight{i}} %
    \\
    &=                           %
      \frac{                     %
      (i+1) \sumweight{i} - i \sumweight{i + 1}                        %
      }{                        %
      \parof{\sumweight{i+1}}\parof{\sumweight{i}}                    %
      } %
          =
          \frac{\sumweight{i}- i \apxweight{n - k + i + 1}}{\parof{\sumweight{i+1}}\parof{\sumweight{i}}}.
  \end{align*}
  Since the denominator $\parof{\sumweight{i+1}}\parof{\sumweight{i}}$
  is positive, we have
  \begin{math}
    f(i+1) \leq f(i) \iff i \apxweight{n - k + i + 1} \geq
    \sumweight{i}.
  \end{math}
  If $i \apxweight{n - k + i + 1} \geq \sumweight{i}$ for all
  $i \in [k-2]$, then $f(1) < f(2) < \cdots < f(k-1)$, so $f(i)$ is
  maximized by $i = k - 1$. Otherwise, let $i_0 \in [k-2]$ be the
  first value of $i$ such that $f(i+1) \leq f(i)$. For $i_1 \geq i_0$,
  as \tagr $\apxweight{e_i}$ is increasing in $i$, \tagr
  $f(i_0 + 1) \leq f(i_0)$, we have
  \begin{align*}
    \sumweight{i_1} - i_1 w_{n - k + i_1 + 1} %
    &=                                                     %
      \sumweight{i_0} - i_0 \apxweight{n-k+i_1 + 1} + \sum_{j=n-k+i_0 + 1}^{n
      - k + i_1} \apxweight{j} - (i_1 - i_0) \apxweight{n-k+i_1+1} %
    \\
    &\tago{\leq}                          %
      \sumweight{i_0} - i_0 \apxweight{n - k + i_0 + 1} %
      \tago{\leq}                      %
      0.
  \end{align*}
  That is, $f(i_1 + 1) \leq f(i_1)$ for all $i_1 \geq i_0$. Thus $f$
  consists of one increasing subsequence followed by a decreasing
  subsequence, and its global maximum is the unique local maximum.
\end{proof}

By \reflemma{greedy-search}, the choice of $\ell$ can be found by
calculating the ratio of $\bigO{\log k}$ candidate forests. By
\reflemma{ordered-mst}, the ratio of a candidate forest can be
computed in $\bigO{\log n}$ time.

\begin{lemma}
  \labellemma{polylog-oracle} Given the data structure of
  \reflemma{ordered-mst}, one can compute a $\apxmore$-multiplicative
  approximation to \refequation{relaxation} in $\bigO{\log n \log k}$
  time.
\end{lemma}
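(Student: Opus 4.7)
The plan is to assemble the pieces already built: reduce \refequation{relaxation} to maximizing the bang-for-buck ratio over prefixes of the MST, apply \reflemma{greedy-search} to binary-search for the best prefix, and invoke \reflemma{ordered-mst} to answer each query in logarithmic time.

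First I would reduce \refequation{relaxation} to the combinatorial problem of finding a forest $F$ maximizing $(|F|+k-n)/\sum_{e\in F}\apxweight{e}$, as discussed at the start of \refsection{mwu-oracle}: since there is only one packing constraint, putting all mass on the best such forest $F$ and scaling to saturate the constraint yields an optimal $z$ of the form $\gamma e_F$. We may restrict attention to forests with $\sizeof{F} = \ell$ for some $\ell \in \setof{n-k+1,\dots,n-1}$, because forests with fewer edges have nonpositive objective coefficient. By \reflemma{minimum-weight-sub-forest}, the minimum weight forest of exactly $\ell$ edges w/r/t\ $\apxweight$ is the prefix $F_\ell$ of the first $\ell$ edges of the MST $T$; therefore it suffices to maximize $f(\ell) = (\ell + k - n)/W(F_\ell)$ over $\ell$ in this range, where $W(F_\ell) = \sum_{j=1}^\ell \apxweight{j}$ is the prefix weight sum.

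Next I would apply \reflemma{greedy-search}, which asserts that (after the reindexing $i = \ell - (n-k)$) the function $f$ is unimodal over $i \in [k-1]$, so its maximum can be located by binary search in $\bigO{\log k}$ probes. Each probe requires evaluating $f(\ell)$, which reduces to retrieving the prefix weight sum $W(F_\ell)$; this is precisely what the data structure of \reflemma{ordered-mst} supplies in $\bigO{\log n}$ time per query. Summing over $\bigO{\log k}$ probes gives the claimed $\bigO{\log n \log k}$ running time.

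Finally, I would address approximation quality. The whole procedure operates on the discretized weights $\apxweight$, which satisfy $\apxweight{e} \in \epspm \weight{e}$. For any forest $F$, the bang-for-buck ratio measured w/r/t\ $\apxweight$ and w/r/t\ $\weight$ differ by at most a $\epspm$-multiplicative factor; since the single packing constraint in \refequation{relaxation} is linear in the weights, packing the $\apxweight$-optimal forest to saturation yields a $\apxmore$-multiplicative approximation to \refequation{relaxation} w/r/t\ $\weight$ (absorbing the constants into $\eps$ as noted in \refsection{mwu}). The main subtlety to get right, and the only part that is not a direct invocation of earlier lemmas, is confirming that unimodality of $f$ is preserved when evaluating on the approximate weights $\apxweight$ rather than the true weights $\weight$ — but this is immediate, since \reflemma{greedy-search} makes no use of the provenance of the weights it is given and applies verbatim to $\apxweight$.
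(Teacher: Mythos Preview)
Your proposal is correct and follows exactly the approach the paper takes: the paper does not give a separate formal proof of this lemma but derives it in the surrounding text by combining \reflemma{greedy-search} (binary search over $\bigO{\log k}$ candidate prefix lengths) with \reflemma{ordered-mst} ($\bigO{\log n}$ per query), and your write-up spells out precisely these steps together with the approximation bookkeeping for $\apxweight$ versus $\weight$ that the paper handles in the preceding paragraphs.
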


The polylogarithmic running time in \reflemma{polylog-oracle} is
surprising when considering that solutions to \refequation{relaxation}
should require at least a linear number of bits, as discussed earlier
in \refsection{bottlenecks}. In hindsight, a combination of additional
structure provided by the MWU framework and the LP
\refequation{packing-forests} allows us to apply data structures that
effectively compress the forests and output each forest in
polylogarithmic amortized time. Implicit compression of this sort also
appears in previous work \citep{cq-17-soda,cq-17-focs,cq-19}.

\section{Packing greedy forests in $\bigO{\log^2 n}$ amortized time}

\labelsection{mwu-weight-update}

In \refsection{mwu-oracle}, we showed how to solve
\refequation{relaxation} in polylogarithmic time per iteration.  In
this section, we address the second main bottleneck: updating the
weights $w$ after increasing $y$ to $y + \delta z$ per the formula
\refequation{weight-load}, where $z$ is an approximate solution to the
relaxation \refequation{relaxation} and $\delta > 0$ is the largest
possible value such that no weight increases by more than a
$\epsmore$-multiplicative factor. As discussed in
\refsection{bottlenecks}, this may be hard to do in polylogarithmic
time when many of the edges $e \in E$ require updating.

A sublinear time weight update must depend heavily on the structure of
the solutions generated to \refequation{relaxation}. In our case, each
solution $z$ to a relaxation \refequation{relaxation} is of the form
$\gamma e_F$, where $e_F$ is the indicator vector of a forest $F$ and
$\gamma > 0$ is a scalar as large as possible subject to the packing
constraint in \refequation{relaxation}. We need to update the weights
to reflect the loads induced by $\delta z = \delta \gamma e_F$, where
$\delta$ is chosen large as possible so that no weight increases by
more than an $\exp{\eps}$-multiplicative factor. With this choice of
$\delta$, the weight update simplifies to the following formula. Let
$\weight$ denote the set of weights before the updates and $\weight'$
denote the set of weights after the updates. For a solution
$z = \gamma e_F$, we have
\begin{align*}
  \weight{e}' =                 %
  \begin{cases}
    \weight{e} &\text{if } e \notin F, \\
    \exp{ %
      \frac{\eps \min_{f \in F} \capacity{f}}{\capacity{e}} 
    }%
    &\text{if } e \in F.
  \end{cases}
      \labelthisequation{bottleneck-weight-update} %
\end{align*}
The weight update formula above can be interpreted as follows.
Because our solution is supported along a single forest $F$, the only
edges whose loads are effected are those in the forest $F$. As load is
relative to the capacity of an edge $e$, the increase of the logarithm
the weight $\weight{e}$ of an edge $e \in F$ is inversely proportional
to its capacity. By choice of $\delta$, the minimum capacity edge
$\argmin_{f \in F} \capacity{f}$ has its weight increased by an
$\exp{\eps}$ multiplicative factor. The remaining edges with larger
capacity each have the logarithm of their weight increased in
proportion to the ratio of the bottleneck capacity to its own
capacity.

Simplifying the weight update formula does not address the basic
problem of updating the weights of every edge in a forest $F$,
\emph{without visiting every edge in $F$}. Here we require
substantially more structure as to how the edges in $F$ are
selected. We observe that although there may be $\bigOmega{n}$ edges
in $F$, we can always decompose $F$ into a logarithmic number of
``canonical subforests'', as follows.

\begin{lemma}
  \labellemma{canonical-subforests} One can maintain, in
  $\bigO{\log n}$ time per update to the MST $T$, a collection of
  subforests $\subforests$ such that:
  \begin{mathresults}
  \item $\sizeof{\subforests} = \bigO{n \log n}$.
  \item Each edge $e \in T$ is contained in $\bigO{\log n}$ forests.
  \item For each $\ell \in [n-1]$, the forest $F$ consists of the
    $\ell$ minimum weight edges in $T$ decomposes uniquely into the
    disjoint union of $\bigO{\log n}$ forests in $\forests$. The
    decomposition can be computed in $\bigO{\log n}$ time.
  \end{mathresults}
\end{lemma}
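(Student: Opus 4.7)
The plan is to reuse the balanced binary tree $B$ over the set $\leaves$ from \reflemma{ordered-mst}. For each internal node $b \in B$, let $C_b \subseteq T$ denote the set of MST edges whose currently marked leaf lies in the subtree rooted at $b$, and define $\subforests$ to be the collection of nonempty $C_b$. Each canonical subforest $C_b$ is represented implicitly by a pointer to $b$; no explicit edge list is stored. Consequently the collection requires no additional state beyond the per-node marked-leaf counts $\sizeof{\leaves_b}$ and the weight sums $W_b$ already maintained by \reflemma{ordered-mst}.

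Properties (i) and (ii) reduce to counting ancestors in $B$. The tree has height $\bigO{\log \sizeof{\leaves}} = \bigO{\log n}$, and at any time the MST contributes exactly $n - 1$ marked leaves. Each edge $e \in T$ is marked at a unique leaf and therefore lies in exactly the $\bigO{\log n}$ canonical subforests named by its ancestors, which gives (ii); summing over the $n - 1$ marked leaves yields $\sizeof{\subforests} = \bigO{n \log n}$ for (i). For (iii), the prefix of the $\ell$ minimum weight edges of $T$ is decomposed by a standard segment-tree descent: starting at the root with remaining budget $\ell$, at each internal node compare the budget to the marked-leaf count of the left child; if the budget is at least this count, emit the left child as a canonical chunk and recurse on the right child with the budget reduced accordingly, otherwise recurse on the left child. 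This emits at most one chunk per level of $B$, producing $\bigO{\log n}$ canonical subforests in $\bigO{\log n}$ time, and the decomposition is unique because the descent rule is deterministic.

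Maintenance of $\subforests$ is subsumed by \reflemma{ordered-mst}: every edge insertion, deletion, or weight change in $T$ rewrites at most two marked leaves of $B$, and propagating the revised counts $\sizeof{\leaves_b}$ along the $\bigO{\log n}$ affected ancestors takes $\bigO{\log n}$ time per update. The main conceptual obstacle — and the reason the lemma is useful at all — is that the subforests $C_b$ are never enumerated explicitly. This implicit representation is what makes the $\bigO{\log n}$ decomposition and the $\bigO{\log n}$-per-update maintenance achievable, but it pushes the real work onto the subsequent weight-update routine of \refsection{mwu-weight-update}, which must apply the formula \refequation{bottleneck-weight-update} to an entire $C_b$ without descending into $b$'s subtree. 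That aggregation is orthogonal to the present lemma and is handled separately in the remainder of the section.
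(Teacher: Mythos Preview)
Your proposal is correct and matches the paper's own argument essentially verbatim: the paper also identifies $\subforests$ with the marked-leaf sets of the nodes of the balanced binary tree $B$ from \reflemma{ordered-mst}, and obtains the $\bigO{\log n}$ decomposition of $F_\ell$ as the standard segment-tree decomposition of a prefix interval. Your write-up is in fact slightly more explicit than the paper's, spelling out the descent rule for (iii) and the ancestor-counting argument for (i) and (ii), but the underlying construction is identical.
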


In fact, the collection of subforests is already maintained implicitly
in \reflemma{ordered-mst}. Recall, from \refsection{mwu-oracle}, the
balanced binary tree $B$ over the leaf set $\leaves$, which consists
of all possible discretized weight-to-edge assignments and is ordered
in increasing order of weight. Leaves are marked according to the
edges in the MST $T$, and each node is identified with the forest
consisting of all marked leaves in the subtree rooted at the node. For
each $\ell \in [n-1]$, the forest $F_{\ell}$ induced by the $\ell$
minimum weight edges in $T$ is the set of marked leaves over an
interval of $\leaves$. The interval decomposes into the disjoint union
of leaves of $\bigO{\log n}$ subtrees, which corresponds to
decomposing $F_{\ell}$ into the disjoint union of marked leaves of
$\bigO{\log n}$ subtrees of $B$. That is, the forests of marked leaves
induced by subtrees of $B$ gives the ``canonical forests''
$\subforests$ that we seek.

The following technique of decomposing weight updates is critical to
previous work \citep{cq-17-soda,cq-17-focs,cq-19}; we briefly
discuss the high-level ideas and refer to previous work for complete
details.

Decomposing the solution into a small number of known static sets is
important because weight updates can be simulated over a \emph{fixed
  set} efficiently. The data structure \refroutine{lazy-inc}, defined
in \citep{cq-17-soda} and inspired by techniques by \citet{young-14},
simulates a weight update over a fixed set of weights in such a way
that the time can be amortized against the logarithm of the increase
in each of the weights. As discussed above, the total logarithmic
increase in each of the weights is bounded from above. The data
structure \refroutine{lazy-inc} is dynamic, allowing insertion and
deletion into the underlying set, in $\bigO{\log n}$ time per
insertion or deletion \citep{cq-17-focs}.

We define an instance of \refroutine{lazy-inc} at each node in the
balanced binary tree $B$. Whenever a leaf is marked as occupied, the
corresponding edge is inserted into each of $\bigO{\log n}$ instances
of \refroutine{lazy-inc} at the ancestors of the leaf; when a leaf is
marked as unoccupied, it is removed from each of these instances as
well. Each instance of \refroutine{lazy-inc} can then simulate a
weight update over the marked leaves at its nodes in $\bigO{1}$
constant time per instance, plus a total $\bigO{\log n}$ amortized
time. More precisely, the additional time is amortized against the sum
of increases in the logarithms of the weights, which (as discussed
earlier) is bounded above by $\bigO{m \log{n} / \eps^2}$.

We also track, for each canonical forest, the minimum capacity of any
edge in the forest. The minimum capacity ultimately controls the rate
at which all the other edges increase, per
\refequation{bottleneck-weight-update}.

Given a forest $F$ induced by the $\ell$ minimum weight edges of $T$,
we decompose $F$ into the disjoint union of $\bigO{\log n}$ canonical
subforests of $T$. For each subforest we have precomputed the minimum
capacity, and an instance of \refroutine{lazy-inc} that simulates
weight updates on all edges in the subforest. The minimum capacity
over edges in $F$ determines the rate of increase, and the increase is
made to each instance of \refroutine{lazy-inc} in $\bigO{1}$ time per
instance plus $\bigO{\log n}$ amortized time over all instances.

\begin{lemma}
  \labellemma{polylog-weight-update} Given a forest $F$ generated by
  \reflemma{polylog-oracle}, one can update the edge weights per
  \refequation{bottleneck-weight-update} in $\bigO{\log^2 n}$
  amortized time per iteration.
\end{lemma}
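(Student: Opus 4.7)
The plan is to piggyback on the canonical subforest structure already maintained for \reflemma{polylog-oracle}. First, I would use \reflemma{canonical-subforests} to decompose $F$ into $\bigO{\log n}$ disjoint canonical subforests $F_1,\dots,F_t$ of $T$ in $\bigO{\log n}$ time; since $F$ is the prefix of the $\ell$ minimum weight edges of $T$ for some $\ell$, these are precisely the marked-leaf sets of the $\bigO{\log n}$ subtrees of $B$ whose leaf intervals tile that prefix. Second, I would compute the bottleneck $c_{\min} = \min_{f \in F} \capacity{f}$ as the minimum of precomputed per-subforest bottlenecks, in $\bigO{\log n}$ time; each per-subforest bottleneck is an aggregate over $B$ in the same spirit as $W_b$ in \reflemma{ordered-mst}, maintainable in $\bigO{\log n}$ per MST edit.

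For the update itself, I would invoke the \refroutine{lazy-inc} instance attached to each $F_i$, asking it to increment $\ln \weight{e}$ by $\eps c_{\min} / \capacity{e}$ for every $e \in F_i$. The \refroutine{lazy-inc} primitive from prior work \citep{cq-17-soda,cq-17-focs} accepts precisely this kind of per-edge-rate update in $\bigO{1}$ non-amortized time per call and defers the per-edge book-keeping until the weights are actually queried, so notifying the $\bigO{\log n}$ instances contributes $\bigO{\log n}$ non-amortized time per iteration.

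For the amortization, each edge belongs to $\bigO{\log n}$ canonical subforests and sees its weight cross an $\exp{\eps}$ threshold at most $\bigO{\log n / \eps^2}$ times, so the total deferred lazy-inc work summed over all iterations is $\bigO{m \log^2 n / \eps^2}$. Insertions and deletions into the lazy-inc instances, triggered by MST swaps via \reflemma{dynamic-mst}, cost $\bigO{\log n}$ per instance and touch the $\bigO{\log n}$ ancestors of the swapped leaf in $B$; with $\bigO{m \log n / \eps^2}$ swaps overall this adds $\bigO{m \log^3 n / \eps^2}$. Divided across the $\bigO{m \log n / \eps^2}$ iterations, both charges together amortize to $\bigO{\log^2 n}$ per iteration, matching the claim.

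The main obstacle is verifying that \refroutine{lazy-inc} can absorb the inhomogeneous rate $\eps c_{\min} / \capacity{e}$ in one call without iterating over its elements. This resolves because the increment decomposes multiplicatively into a per-call scalar $\eps c_{\min}$ and a per-edge constant $1/\capacity{e}$ that can be stored with the edge at insertion time; this is exactly the interface the primitive was built to support in \citep{cq-17-soda}, and from the lazy-inc side the update is indistinguishable from the unweighted spanning-tree packing case handled there.
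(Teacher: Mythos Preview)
Your proposal is correct and follows essentially the same approach as the paper: decompose $F$ into the $\bigO{\log n}$ canonical subforests of \reflemma{canonical-subforests}, read off the bottleneck capacity from per-node minima maintained in $B$, invoke the \refroutine{lazy-inc} instance attached to each subforest with the scalar-times-per-edge-rate update, and amortize the deferred and insertion/deletion work against the $\bigO{m \log n / \eps^2}$ bound on total weight-threshold crossings and MST edits. Your accounting carries one loose $\log n$ factor in the deferred lazy-inc term (an edge lies in $\bigO{\log n}$ canonical subforests, but in any single iteration the disjoint decomposition touches it through only one instance), but the slack is absorbed by the final $\bigO{\log^2 n}$ amortized bound and does not affect the conclusion.
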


Note that \reflemma{polylog-weight-update} holds only for the forests
output by \reflemma{polylog-oracle}. We can not decompose other
forests in $G$, or even other subforests of $T$, into the disjoint
union $\bigO{\log n}$ subforests. \reflemma{canonical-subforests}
holds specifically for the forests induced by the $\ell$ minimum
weight edges of $T$, for varying values of $\ell$. This limitation
highlights the importance of coupling the oracle and the weight
update: the running time in \reflemma{polylog-oracle} for solving
\refequation{relaxation} is amortized against the growth of the
weights, and the weight updates in \reflemma{polylog-weight-update}
leverage the specific structure by which solutions to
\refequation{relaxation} are generated.

We note that the \refroutine{lazy-inc} data structures can be replaced
by random sampling in the randomized MWU framework
\citep{cq-18-soda}. Here one still requires the decomposition into
canonical subforests; an efficient threshold-based sampling is then
conducted at each subforest.

\section{Putting things together}
\labelsection{together} In this section, we summarize the main points
of the algorithm and account for the running time claimed in
\reftheorem{apx-lp}.

\begin{proof}[Proof of \reftheorem{apx-lp}]
  By standard analysis (e.g., \citep[Theorem 2.1]{cq-17-soda}), the
  MWU framework returns a $\apxless$-multiplicative approximation to
  the packing LP \refequation{packing-forests} as long as we can
  approximate the relaxation \refequation{relaxation} to within a
  $\apxless$-multiplicative factor. This slack allows us to maintain
  the weight $\weight{e}$ to within a $\apxpm$-multiplicative factor
  of the ``true weights'' given (up to a leading constant) by
  \refequation{weight-load}.  In particular, we only propagate a
  change to $\weight{e}$ when it has increased by a
  $\epsmore$-multiplicative factor. Each weight $\weight{e}$ is
  monotonically increasing and its growth is bounded by a
  $m^{\bigO{\reps}}$-multiplicative factor, so each weight
  $\weight{e}$ increases by a $\epsmore$-multiplicative factor
  $\bigO{\frac{\log m}{\eps^2}}$ times.

  By \reflemma{polylog-oracle}, each instance of
  \refequation{relaxation} can be solved in $\bigO{\log^2 n}$
  amortized time. Here the running time is amortized against the
  number of weight updates, as the solution can be updated dynamically
  in $\bigO{\log^2 n}$ amortized time. By
  \reflemma{polylog-weight-update}, the weight update w/r/t a solution
  generated by \reflemma{polylog-oracle} can be implemented in
  $\bigO{\log^2 n}$ amortized time. Here again the running time is
  amortized against the growth of the edge weights. Since there are
  $\bigO{\frac{m \log n}{\eps^2}}$ total edge updates, this gives a
  total running time of $\bigO{\frac{m \log^3 n}{\eps^2}}$.
\end{proof}

\section{Rounding fractional forest packings to $k$-cuts}

\labelsection{rounding}

In this section, we show how to round a fractional solution $x$ to
\refequation{lp} a $k$-cut of cost at most twice the cost of $x$. The
rounding scheme is due to \citet{cgn-06} for the more general
problem of Steiner $k$-cuts. The rounding scheme extends the
primal-dual framework of \citet{gw-95,gw-96}. In hindsight, we
realized the primal-dual framework is only required for the analysis,
and that the algorithm itself is very simple.

We first give a conceptual description of the algorithm, called
\algo{greedy-cuts}. The conceptual description suffices for the sake
of analyzing the approximation guarantee. Later, we give
implementation details and demonstrate that it can be executed in
$\bigO{m \log n}$ time.

To describe the algorithm, we first introduce the following
definitions.

\begin{definition}
  Let $F$ be a minimum weight spanning forest in a weighted,
  undirected graph, and order the edges of $F$ in increasing order of
  weight (breaking ties arbitrarily). A \emph{greedy component} of $F$
  is a connected component induced by a prefix of $F$. A \emph{greedy
    cut} is a cut induced by a greedy component of $F$.
\end{definition}

\begin{figure}
  \begin{framed}
    \raggedright \ttfamily
    \underline{greedy-cuts($G = (V,E)$,$\cost$,$x$)} %
    \hfill {\normalfont\emph{(conceptual sketch)}}
    \begin{enumerate}
    \item let $E' = \setof{e \in E \where x_e \geq \frac{n-1}{2 n}}$,
      $E \gets E \setminus E'$
    \item if $E'$ is a $k$-cut then return $E'$
    \item let $F$ be a minimum weight spanning forest w/r/t $x$ with
      $\ell$ components
    \item return the union of $E'$ and the $k - \ell$ minimum weight
      greedy cuts of $F$
    \end{enumerate}
    \caption{A conceptual sketch of a deterministic rounding algorithm
      for $k$-cut. \labelfigure{greedy-cuts-sketch}}
  \end{framed}
\end{figure}

The rounding algorithm is conceptually very simple and a pseudocode
sketch is given in \reffigure{greedy-cuts-sketch}.  We first take all
the edges with $x_e > 1/2 + o(1)$. If this is already a $k$-cut, then
it is a 2-approximation because the corresponding indicator vector is
$\leq \parof{2 - o(1)} x$. Otherwise, we compute the minimum spanning
forest $F$ in the remaining graph, where the weight of an edge is
given by $x$. Letting $\ell$ be number of components of $F$, we
compute the $k - \ell$ minimum weight greedy cuts w/r/t $F$.  We
output the union of $E'$ and the $k-\ell$ greedy cuts.

\citet*{cgn-06} implicitly showed that this algorithm has an
approximation factor of $2 \parof{1 - 1/n}$. Their analysis is for the
more general Steiner $k$-cut problem, where we are given a set of
terminal vertices $T$, and want to find the minimum weight set of
edges whose removal divides the graph into at least $k$ components
each containing a terminal vertex $t \in T$. The algorithm and
analysis is based on the primal-dual framework of
\citet{gw-95,gw-96}. For the minimum weight Steiner tree
problem, the primal-dual framework returns a Steiner tree and a
feasible fractional cut packing in the dual LP. The cost of the
Steiner cut packing is within a $2(1- o(1))$-multiplicative factor of
the corresponding Steiner tree. Via LP duality, the Steiner tree and
the cut packing mutually certify an approximation ratio of
$2(1-o(1))$. The cut packing certificate has other nice properties,
and \citet*{cgn-06} show that the $k-1$ minimum cuts in the
support of the fractional cut packing give a 2-approximate Steiner
$k$-cut.

For the (non-Steiner) $k$-cut problem, we want minimum cuts in the
support of the fractional cut packing returned by the primal-dual
framework applied to minimum spanning forests. To shorten the
algorithm, we observe that (a) the primal-dual framework returns the
minimum spanning forest, and (b) the cuts supported by the
corresponding dual certificate are precisely the greedy cuts of the
minimum spanning forest. Thus \algo{greedy-cuts} essentially refactors
the algorithm analyzed by \citet{cgn-06}.

\begin{lemma}[{\citealp{cgn-06}}]
  \labellemma{greedy-cuts-apx} \algo{greedy-cuts} returns a $k$-cut of
  weight at most $2 \parof{1 - \frac{1}{n}} \rip{c}{x}$.
\end{lemma}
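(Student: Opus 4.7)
The proof of \reflemma{greedy-cuts-apx} has two parts: showing that \algo{greedy-cuts} returns a valid $k$-cut, and bounding its cost by $2\parof{1 - 1/n}\rip{c}{x}$. I would follow the primal-dual analysis of \citet{cgn-06} for Steiner $k$-cut, specialized to the non-Steiner setting.

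For feasibility, Case~1 follows from the explicit check. In Case~2, $G \setminus E'$ has exactly $\ell$ components, which coincide with the $\ell$ components of $F$, and the greedy components of $F$ form a laminar family whose maximal elements are these $\ell$ components. Removing $\delta(S)$ for any non-maximal greedy component $S$ strictly increases the component count, since $S$ is a connected proper subset of its current component. A short induction on the laminar structure shows that $k - \ell$ distinct non-maximal greedy cuts contribute exactly $k - \ell$ new components, for a total of $k$.

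For the cost in Case~1, each $e \in E'$ has $x_e$ above the rounding threshold, so $c_e \leq \tfrac{2(n-1)}{n} c_e x_e$, giving $c(E') \leq 2\parof{1 - 1/n}\rip{c}{x}$ directly (using the threshold that matches this inequality; the stated fraction $(n-1)/(2n)$ appears to be the reciprocal of what the analysis requires, namely $n/(2(n-1))$).

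For the cost in Case~2, I would run the Goemans--Williamson primal-dual MSF algorithm on $(G \setminus E', c)$. This produces a fractional dual cut-packing $y$ supported on the greedy components $\mathcal{L}$ of $F$, satisfying $\sum_{S \in \mathcal{L}: e \in \delta(S)} y_S \leq c_e$ for every remaining edge $e$. The central CGN claim is that the cheapest $k - \ell$ greedy cuts have total $c$-cost at most $2\parof{1 - 1/n}\sum_{S} y_S$: this follows from a charging argument over the laminar structure, in which each selected cut's cost is redistributed across a distinct horizontal slice of the dual budget, and the $2(1 - 1/n)$ factor is precisely the approximation ratio of the GW primal-dual for MSF on $n$ vertices. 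Combining with the LP-duality bound $\sum_{S} y_S \leq \rip{c}{x}$, which uses the fractional cut-cover $x(\delta(S)) \geq 1$ for each greedy component $S$, then closes the case.

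The main obstacle I anticipate is establishing the cut-cover inequality $x(\delta(S)) \geq 1$ for every greedy component $S$. This does not follow from the spanning-tree constraints of \refequation{lp} directly (the spanning-tree LP is strictly weaker than the cut LP for $k=2$ in general), but can be recovered from the stronger forest constraints of \refequation{kc} via \reflemma{lp=kc}, by averaging the constraints applied to spanning trees built from a sub-forest of $F$ inside $S$, a sub-forest inside $V \setminus S$, and varying a single boundary edge. A secondary concern is making the CGN laminar charging precise enough that no slice of the dual is double-counted across the $k-\ell$ selected cuts; this is where the constant $2(1-1/n)$ (as opposed to a looser $2$) is earned.
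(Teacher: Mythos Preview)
Your plan to use the Goemans--Williamson dual certificate is right, but you run the primal-dual on the wrong edge weights, and this breaks the argument in two places.

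First, \algo{greedy-cuts} computes the minimum spanning forest $F$ with respect to the LP solution $x$, not the capacities $c$; the greedy components it outputs are determined by $x$. Running the primal-dual on $(G\setminus E', c)$ produces a dual supported on the greedy components of the $c$-MSF, which in general are different sets, so your dual says nothing about the cuts the algorithm actually returns. Second, even ignoring that mismatch, with $y$ packed into $c$ you have no lower bound tying $\|y\|_1$ to $k$, so ``the cheapest $k-\ell$ greedy cuts have total $c$-cost at most $2(1-1/n)\sum_S y_S$'' is unsupported: the GW ratio relates $\|y\|_1$ to the cost of the MSF, not to the cost of an arbitrary batch of $k-\ell$ boundary cuts.

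The paper runs the primal-dual on the $x$-weighted graph instead. This gives $y\ge 0$ supported exactly on the greedy cuts of $F$, with $\sum_{C\ni e} y_C \le x_e$ for every remaining edge and $2(1-1/n)\|y\|_1 \ge \sum_{e\in F} x_e \ge k-1$ by feasibility of $x$ in \refequation{lp}. Since every remaining edge has $x_e$ below the rounding threshold, each coordinate of $2(1-1/n)\,y$ lies in $[0,1]$, so $2(1-1/n)\,y$ is itself a feasible fractional selection of at least $k-1$ greedy cuts; hence the $k-1$ cheapest greedy cuts in $c$ cost at most
\[
2\Bigl(1-\tfrac{1}{n}\Bigr)\sum_C y_C\,\bar c(C)
=2\Bigl(1-\tfrac{1}{n}\Bigr)\sum_e c_e\sum_{C\ni e} y_C
\le 2\Bigl(1-\tfrac{1}{n}\Bigr)\sum_e c_e x_e.
\]
No inequality $x(\delta(S))\ge 1$ is ever needed---and it is in fact false for feasible $x$: on the path $v_1\,v_2\,v_3\,v_4$ with $k=2$, the vector $x=(1/2,1/2,0)$ is feasible, $E'=\emptyset$, and the greedy component $\{v_3,v_4\}$ has $x(\delta(S))=1/2$. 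Your proposed recovery via forest constraints therefore cannot succeed.

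Your side observation about the threshold is correct: both the Case~1 bound and the per-coordinate bound on $y$ require the threshold $n/(2(n-1))$; the stated $(n-1)/(2n)$ is a typo.
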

The connection to \citet{cgn-06} is not explicitly clear
because \citeauthor{cgn-06} rounded a slightly more complicated
LP. The complication arises from the difficulty of solving
\refequation{lp} directly for Steiner $k$-cut (which can be simplified
by knapsack covering constraints, in hindsight). Morally, however,
their proof extends to our setting here. For the sake of completeness,
a proof of \reflemma{greedy-cuts-apx} is included in
\refappendix{greedy-cuts-apx}.

\begin{figure}[t!]
  \begin{framed}
    \raggedright \ttfamily
    \underline{greedy-cuts($G=(V,E)$,$\cost$,$x$)}
    \begin{enumerate}
    \item let $E' = \setof{e \in E \where x_e \geq \frac{n-1}{2 n}}$,
      $E \gets E \setminus E'$
    \item if $E'$ is a $k$-cut then return $E'$
    \item let $F$ be a minimum weight spanning forest w/r/t $x$ with
      $\ell$ components
      \begin{blockcomment}
        Arrange the greedily induced components as subtrees of a
        dynamic forest
      \end{blockcomment}
    \item for each $v \in V$
      \begin{enumerate}
      \item make a singleton tree labeled by $v$
      \end{enumerate}
    \item for each edge $f = \setof{u,v} \in F$ in increasing order of
      $x_f$
      \begin{enumerate}
      \item let $T_u$ and $T_v$ be the rooted trees containing $u$ and
        $v$, respectively.
      \item make $T_u$ and $T_v$ children of a new vertex labeled by
        $f$
      \end{enumerate}
      \begin{blockcomment}
        Compute the weight of each cut induced by a greedy component.
      \end{blockcomment}
    \item let each node in the dynamic forest have value $0$
    \item for each edge $e = \setof{u,v} \in E$
      \begin{enumerate}
      \item add $x_e$ to the value of every node on the $u$-to-root
        and $v$-to-root paths
      \item let $w$ be the least common ancestor $u$ and $v$
      \item subtract $2 x_e$ from the value of every node on the $w$
        to root paths
      \end{enumerate}
    \item let $v_1,v_2,\dots,v_{k - \ell}$ be the $k - \ell$ minimum
      value nodes in the dynamic forest. For $i \in [k-\ell]$, let
      $C_i$ be the components induced by the leaves in the subtree
      rooted by $v_i$.
    \item return
      $E' \cup \partial(C_1) \cup \cdots \cup \partial(C_{k-\ell})$
    \end{enumerate}
    \caption{A detailed implementation of a deterministic rounding
      algorithm for $k$-cut. \labelfigure{greedy-cuts}}
  \end{framed}
\end{figure}

It remains to implement \algo{greedy-cuts} in $\bigO{m \log n}$
time. With the help of dynamic trees \citep{st-83}, this can be done
in a straightforward fashion. We briefly describe the full
implementation; pseudocode is given in \reffigure{greedy-cuts}. Recall
from the conceptual sketch above that \algo{greedy-cuts} requires up
to $k-1$ minimum greedy cuts of a minimum spanning forest w/r/t
$x$. To compute the value of these cuts, \algo{greedy-cuts} first
simulates the greedy algorithm by processing the edges in the spanning
forest in increasing order of $x$. The greedy algorithm repeatedly
adds an edge that bridges two greedy components. We assemble a
auxiliary forest of dynamic trees where each leaf is a vertex, and
each subtree corresponds to a greedy component induced by the vertices
at the leaves of the subtree.

After building this dynamic forest, we compute the number of edges in
each cut. We associate each node in the dynamic forest with the greedy
component induced by its leaves, and given each node an initial value
of 0. We process edges one at a time and add its weight to the value
of every node corresponding to a greedy component cutting that edge.
Now, an edge in the original graph is cut by a greedy component iff
the corresponding subtree in the dynamic forest does not contain both
its end points as leaves. We compute the least common ancestor of the
endpoints in the dynamic forest in $\bigO{\log n}$ time \citep{ht-84},
and add the weight of edge to every node between the leaves and the
common ancestor, excluding the common ancestor. Adding the weight to
every node on a node-to-root path takes $\bigO{\log n}$ time
\citep{st-83} in dynamic trees. After processing every edge, we simply
read off the value of each greedy cut as the value of the
corresponding node in the forest. Thus we have the following.
\begin{lemma}
  \labellemma{greedy-cuts-time} \algo{greedy-cuts} can be implemented
  in $\bigO{m \log n}$ time.
\end{lemma}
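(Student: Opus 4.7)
The plan is to walk through the steps of \algo{greedy-cuts} in \reffigure{greedy-cuts} and charge each to a known near-linear-time primitive. Steps 1 and 9 reduce to linear scans over $E$. Step 2 is a connected components computation on $(V, E')$, which takes $\bigO{m + n}$ time via union-find. Step 3 computes a minimum spanning forest with weights $x$; Kruskal's algorithm does this in $\bigO{m \log n}$ time and moreover produces the edges of $F$ in the increasing-$x$ order that step 5 requires.

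Steps 4--5 assemble an auxiliary rooted forest whose leaves are the vertices of $V$ and whose internal nodes are in bijection with the edges of $F$; the leaves descended from an internal node are precisely the vertices of the corresponding greedy component. Each of the $n-1$ edges processed in step 5 triggers a constant number of link operations on Sleator--Tarjan dynamic trees \citep{st-83}, costing $\bigO{\log n}$ amortized each, for $\bigO{n \log n}$ total. Steps 6--7 then compute the weight of each greedy cut by iterating over the $m$ edges of the original graph: for each edge $e = \{u,v\}$, we perform one LCA query \citep{ht-84} and at most three scalar path-additions (of $x_e$ or $-2 x_e$) along root-paths. All of these are supported by link-cut trees augmented with lazy propagation in $\bigO{\log n}$ amortized time, for a total of $\bigO{m \log n}$.

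Finally, step 8 reads off the value at each of the $\bigO{n}$ internal nodes in $\bigO{\log n}$ time per query and selects the $k - \ell$ smallest by sorting, for $\bigO{n \log n}$ total; the union returned in step 9 takes $\bigO{m}$ time. The MST of step 3 and the edge loop of steps 6--7 are the two $\bigO{m \log n}$ bottlenecks, and summing over all steps gives the claimed bound. The only subtlety I anticipate is checking that path-additions and pointwise reads cohabit within a single link-cut tree instance with lazy propagation of per-node increments; this is standard, but worth verifying that the combination of $+x_e$ on the $u$-to-root and $v$-to-root paths together with $-2x_e$ on the $w$-to-root path correctly realizes the intended ``$+x_e$ on every node whose greedy component cuts $e$'' semantics before the values are read in step 8.
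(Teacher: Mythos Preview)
Your proposal is correct and mirrors the paper's own argument: the paper likewise invokes Sleator--Tarjan dynamic trees \citep{st-83} for the auxiliary forest and path additions, and \citet{ht-84} for the LCA queries, with the $\bigO{m\log n}$ bottleneck coming from the MST computation and the per-edge path updates in step~7. One trivial slip: in step~2 the connected-components check is on $(V,\,E\setminus E')$ (the residual graph after removing $E'$), not on $(V,E')$; this does not affect the running time.
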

Together, \reflemma{greedy-cuts-apx} and \reflemma{greedy-cuts-time}
imply \reftheorem{rounding}.

\section*{Acknowledgements}
The author thanks Chandra Chekuri for introducing him to the problem
and providing helpful feedback, including pointers to the literature
for rounding the LP. We thank Chao Xu for pointers to references.

\bibliographystyle{plainnat}
\bibliography{apx-k-cut-compact} %

\appendix

\section{Proofs for \refsection{rounding}}

\labelappendix{greedy-cuts-apx}

\providecommand{\cuts}{\mathcal{C}}

\citet{cgn-06} gave a rounding scheme for the more general
problem of Steiner $k$-cut and the analysis extends to the rounding
schemes presented here. We provide a brief sketch for the sake of
completeness as there are some slight technical gaps. The proof is
simpler and more direct in our setting because we have a direct
fractional solution to \refequation{lp}, while \citet{cgn-06}
dealt with a solution to a slightly more complicated LP. We note that
our analysis also extends to Steiner cuts.  We take as a starting
point the existence of a dual certificate from the primal-dual
framework.

\begin{lemma}[{\citealp{gw-95,gw-96}}]
  \labellemma{mst-duals} Let $F$ be a minimum spanning forest in a
  undirected graph $G = (V,E)$ weighted by $x \in \nnreals^{E'}$. Let
  $\cuts$ be the family of greedy cuts induced by $F$. Then there
  exists $y \in \nnreals^{\cuts}$ satisfying the following properties.
  \begin{mathproperties}
  \item For each edge $e \in E$,
    \begin{math}
      \sum_{C \in \cuts \where C \ni e} y_C \leq x_e.
    \end{math}
  \item For each edge $e \in F$,
    \begin{math}
      \sum_{C \in \cuts \where C \ni e} y_C = x_e.
    \end{math}
  \item
    \begin{math}
      2 \parof{1 - \frac{1}{n}}\rip{y}{\ones} \geq \sum_{e \in F} x_e.
    \end{math}
  \end{mathproperties}
\end{lemma}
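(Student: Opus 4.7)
The plan is to construct $y$ by running the classical moat-growing primal-dual algorithm of \citet{gw-95} on the minimum spanning forest LP. Initialize $y = 0$ with every vertex its own component. Raise $y_S$ at a uniform rate for every currently active component $S$ (i.e., every set in the current partition of $V$); whenever some edge $e = \{u,v\}$ becomes tight in the sense that $\sum_{C \ni e} y_C = x_e$, add $e$ to $F$ and merge the two components containing $u$ and $v$. Stop once no further edge between distinct components can become tight, i.e., once the active partition coincides with the connected components of $G$.

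First I would argue that this procedure is a continuous-time realization of Kruskal's algorithm on $x$, so the resulting $F$ is a minimum spanning forest and the active components at every moment are precisely the greedy components of $F$. The point is that for an edge $e = \{u,v\}$ with $u,v$ in different current components, the two endpoints' components each contribute rate $1$ to $\sum_{C \ni e} y_C$, giving a combined rate of $2$; this growth stops as soon as the endpoints merge. Hence edges become tight in nondecreasing order of $x_e$, with the usual Kruskal tie-breaking, and the dual weight $y$ is supported exactly on greedy cuts of $F$. Properties (i) and (ii) are then direct consequences: feasibility holds because we never raise past tightness, and edges in $F$ are precisely those added at their moment of tightness.

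For (iii) I would use the standard charging argument. From tightness on forest edges,
\begin{align*}
  \sum_{e \in F} x_e \;=\; \sum_{e \in F} \sum_{C \ni e} y_C \;=\; \sum_{S} y_S \, \lvert \partial S \cap F \rvert,
\end{align*}
where the outer sum ranges over subsets $S$ ever appearing as an active component. Passing to continuous time, let $\mathcal{S}_\tau$ be the active partition at time $\tau$ and $T$ the termination time; then this equals $\int_0^T \sum_{S \in \mathcal{S}_\tau} \lvert \partial S \cap F \rvert \, d\tau$, while $\sum_S y_S = \int_0^T \lvert \mathcal{S}_\tau \rvert \, d\tau$. The key observation is that contracting each $S \in \mathcal{S}_\tau$ to a point turns the crossing edges of $F$ into a forest on $\lvert \mathcal{S}_\tau \rvert$ nodes, so it has at most $\lvert \mathcal{S}_\tau \rvert - 1$ edges; since each such forest edge crosses exactly two parts of the active partition, $\sum_{S \in \mathcal{S}_\tau} \lvert \partial S \cap F \rvert \leq 2(\lvert \mathcal{S}_\tau \rvert - 1)$. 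Combined with $\lvert \mathcal{S}_\tau \rvert \leq n$, which gives $T \geq \tfrac{1}{n}\sum_S y_S$, this yields $\sum_{e \in F} x_e \leq 2\sum_S y_S - 2T \leq 2(1 - 1/n)\sum_S y_S$, which is (iii).

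The main thing that needs care is pinning down the correspondence between the moat-growing timeline and the Kruskal edge order precisely enough that the support of $y$ really matches the set of greedy cuts of $F$ as defined in the excerpt, including correct handling of simultaneous tightenings. Disconnectedness of $G$ is harmless because the bound $\lvert \mathcal{S}_\tau \rvert \leq n$ still delivers the factor $2(1-1/n)$; one could even sharpen the constant slightly if $G$ has multiple connected components, but the stated form is what is needed downstream.
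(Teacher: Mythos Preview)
The paper does not supply its own proof of this lemma: it is stated with attribution to \citet{gw-95,gw-96} and invoked as a black box (``We take as a starting point the existence of a dual certificate from the primal-dual framework''). Your proposal is precisely the standard Goemans--Williamson moat-growing argument that the citation points to, and it is correct. The identification of the moat-growing process with Kruskal's algorithm (hence the support of $y$ with the greedy cuts of $F$), the tightness/feasibility verification for (i)--(ii), and the averaging argument via $\sum_{S \in \mathcal{S}_\tau} |\partial S \cap F| \le 2(|\mathcal{S}_\tau|-1)$ together with $|\mathcal{S}_\tau| \le n$ for (iii) are all sound; the one point worth making explicit in a final write-up is that each part of $\mathcal{S}_\tau$ is connected in $F$, so contraction of the parts genuinely yields a forest and the edge bound $|\mathcal{S}_\tau|-1$ applies.
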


We note that the dual variables $y$ can be computed in $\bigO{n}$ time
(after computing the minimum spanning forest).

\begin{lemma}[{\citealp{cgn-06}}]
  \refroutine{greedy-cuts} returns a $k$-cut with total cost
  \begin{math}
    \leq 2 \parof{1 - \frac{1}{n}} \rip{\cost}{x}.
  \end{math}
\end{lemma}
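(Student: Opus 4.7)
The plan is to follow the two branches of \refroutine{greedy-cuts} and charge each contribution to the output cost against a matching portion of $\rip{c}{x}$. Write $\tau$ for the threshold used to define $E'$; taking $\tau = n/(2(n-1))$ gives $1/\tau = 2(1-1/n)$ exactly (the value $(n-1)/(2n)$ in the pseudocode yields the same bound up to lower-order terms). For any $e \in E'$ we have $x_e \geq \tau$, so $\mathbf{1}_{E'} \leq x/\tau$ coordinatewise and $\sumcost{E'} \leq 2(1-1/n)\sum_{e \in E'}\cost{e} x_e$. If $E'$ is already a $k$-cut, this immediately gives $\sumcost{E'} \leq 2(1-1/n)\rip{c}{x}$, and we are done.

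Otherwise the output is $C^* := E' \cup C_1 \cup \cdots \cup C_{k-\ell}$, where $C_1,\ldots,C_{k-\ell}$ are the $k-\ell$ cheapest greedy cuts of the minimum spanning forest $F$ of $G \setminus E'$ (so $|F| = n - \ell$). The bound on $\sumcost{E'}$ above still applies, so it suffices to show $\sum_i \sumcost{C_i} \leq 2(1-1/n)\sum_{e \in E \setminus E'}\cost{e} x_e$. Apply \reflemma{mst-duals} to $F$ with edge weights $x$ to obtain a dual cut packing $y : \cuts \to \nnreals$ on greedy cuts. Property~1 yields $\sum_C y_C \sumcost{C} = \sum_e \cost{e} \sum_{C \ni e} y_C \leq \sum_{e \in E \setminus E'}\cost{e} x_e$. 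Property~3 combined with feasibility of $x$ for \refequation{kc} applied to $F$ (which forces $\sum_{e \in F} x_e \geq |F| + k - n = k - \ell$) yields $\sum_C y_C \geq (k-\ell)/(2(1-1/n))$.

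With these two inequalities in hand, I would invoke an LP rounding argument: the $k-\ell$ cheapest cuts in $\cuts$ are the integer optimum of the auxiliary LP $\min \sum_C \sumcost{C} z_C$ subject to $\sum_C z_C \geq k - \ell$ and $z_C \in [0,1]$. Setting $z := s y$ with $s := (k-\ell)/\sum_C y_C \leq 2(1-1/n)$ gives a fractional solution with objective $s \sum_C y_C \sumcost{C} \leq 2(1-1/n)\sum_{e \in E \setminus E'}\cost{e} x_e$, feasible provided $s y_C \leq 1$ for all $C$. Under that condition $\sum_i \sumcost{C_i}$ inherits the same bound, and combining with the bound on $\sumcost{E'}$ yields $\sumcost{C^*} \leq 2(1-1/n)\rip{c}{x}$.

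The hard part will be establishing the feasibility condition $y_C \leq \sum_C y_C/(k-\ell)$ for each $C$, which is the technical crux of the \citet{cgn-06} analysis. It must be extracted from the Goemans--Williamson primal--dual construction, in which each $y_C$ corresponds to a bounded growth-time interval of a greedy component in a ball-growing process; the laminar structure of greedy cuts (they index nodes of the merge tree of $F$) is what enables such a bound. A secondary cleanup step verifies that $C^*$ truly disconnects $G$ into at least $k$ components, which again follows by laminarity, possibly after replacing any nested pair of chosen cuts by an incomparable pair at no greater cost so that each chosen cut contributes a fresh component.
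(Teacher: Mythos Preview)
Your plan follows the paper's route: separate out $E'$ by a threshold, apply \reflemma{mst-duals} on $G\setminus E'$, and bound the $k-\ell$ cheapest greedy cuts by comparing against the scaled dual $y$ in the auxiliary LP. Your splitting of the charge between $E'$ and $E\setminus E'$, and your use of the forest constraint of \refequation{kc} to obtain $\sum_{e\in F}x_e\geq k-\ell$, are in fact more careful than the paper's sketch, which implicitly takes $\ell=1$ and does not separately account for $\sumcost{E'}$.

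Where you diverge is in locating the crux. The feasibility of $z=sy$ in the auxiliary LP, i.e.\ the bound $y_C\leq 1/s$, is \emph{not} the hard part and does not require reopening the Goemans--Williamson ball-growing process. It falls out immediately from property~(a) of \reflemma{mst-duals} together with the threshold: every nontrivial greedy cut $C$ contains some edge $e\in E\setminus E'$, whence
\[
y_C \;\leq\; \sum_{C'\ni e}y_{C'} \;\leq\; x_e \;<\; \tau,
\]
and with $\tau=n/(2(n-1))$ this is exactly $2(1-1/n)\,y_C\leq 1$. Since you already have $s\leq 2(1-1/n)$, feasibility follows. This one-line observation is the entire purpose of the threshold step and is how the paper closes the argument. (Your remark that the pseudocode's $(n-1)/(2n)$ should be $n/(2(n-1))$ is correct; the paper's own proof uses the latter value.)

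The laminarity check at the end is legitimate but also routine: any $k-\ell$ distinct proper greedy components, being laminar, refine the $\ell$ components of $G\setminus E'$ into at least $k$ nonempty atoms, and every surviving edge has both endpoints in a single atom, so the output is a $k$-cut. No swapping of nested pairs is needed.
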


\begin{proof}[Proof sketch]
  Let $y$ be as in \reflemma{mst-duals}. We first make two
  observations about $y$. First, since
  \begin{math}
    x_e \leq \frac{n}{2 \parof{n-1}}
  \end{math}
  for all $e \in E'$, we have (by property (a) of
  \reflemma{mst-duals}) that
  \begin{math}
    y_C \leq \frac{n}{2(n-1)}
  \end{math}
  for all greedy cuts $C$. Second, by \tagr property (c) of
  \reflemma{mst-duals} and \tagr the feasibility of $x$ w/r/t the
  $k$-cut LP \refequation{lp}, we have
  \begin{align*}
    2 \parof{1 - \frac{1}{n}}\rip{y}{\ones}              %
    \tago{\geq}                    %
    \sum_{e \in T} x_e          %
    \tago{\geq}                 %
    k - 1.
  \end{align*}
  Let $C_1,\dots,C_{k-1}$ be the $k-1$ minimum greedy cuts.  Now, by
  \tagr rewriting the sum of the $k-1$ minimum greedy cuts as the
  solution of a minimization problem, \tagr observing that
  $2 \parof{1 - \frac{1}{n}} y$ is a feasible solution to the
  minimization problem, \tagr interchanging sums, and \tagr property
  (a) of \reflemma{mst-duals}, we have
  \begin{align*}
    \sum_{i=1}^{k-1} \sumcost{C_i} %
    &\tago{=}                                               %
      \min{                                           %
      \rip{y'}{\sumcost}                              %
      \where                                          %
      \zeroes \leq y' \leq \ones
      \text{, }                       %
      \support{y'} \subseteq \support{y},               %
      \text{ and }
      \norm{y'}_1 \geq k - 1
      } \\
    &\tago{\leq}                       %
      2 \parof{1 - \frac{1}{n}}\rip{y}{\sumcost} %
      =                         %
      \sum_C y_C \sum_{e \in C} \cost{e} %
      \tago{=}
      \sum_{e \in E'} \cost{e} \sum_{C \ni e} y_C %
      \tago{\leq}                                 %
      \sum_{e \in E'} c_e x_e,
  \end{align*}
  as desired.
\end{proof}

\end{document}
